\newcommand{\mb}[1]{{  \mathbf  #1}}  
\begin{document}
\newtheorem{theorem}{Theorem}
\newtheorem{acknowledgement}[theorem]{Acknowledgement}
\newtheorem{axiom}[theorem]{Axiom}
\newtheorem{case}[theorem]{Case}
\newtheorem{claim}[theorem]{Claim}
\newtheorem{conclusion}[theorem]{Conclusion}
\newtheorem{condition}[theorem]{Condition}
\newtheorem{conjecture}[theorem]{Conjecture}
\newtheorem{criterion}[theorem]{Criterion}
\newtheorem{definition}{Definition}
\newtheorem{exercise}[theorem]{Exercise}
\newtheorem{lemma}{Lemma}
\newtheorem{corollary}{Corollary}
\newtheorem{notation}[theorem]{Notation}
\newtheorem{problem}[theorem]{Problem}
\newtheorem{proposition}{Proposition}
\newtheorem{solution}[theorem]{Solution}
\newtheorem{summary}[theorem]{Summary}
\newtheorem{assumption}{Assumption}
\newtheorem{example}{\bf Example}
\newtheorem{remark}{\bf Remark}

\newtheorem{thm}{Corollary}[section]
\renewcommand{\thethm}{\arabic{section}.\arabic{thm}}

\def\qed{$\Box$}
\def\QED{\mbox{\phantom{m}}\nolinebreak\hfill$\,\Box$}
\def\proof{\noindent{\emph{Proof:} }}
\def\poof{\noindent{\emph{Sketch of Proof:} }}
\def
\endproof{\hspace*{\fill}~\qed
\par
\endtrivlist\unskip}
\def\endproof{\hspace*{\fill}~\qed\par\endtrivlist\vskip3pt}

\def\E{\mathsf{E}}
\def\eps{\varepsilon}
\def\phi{\varphi}
\def\Lsp{{\boldsymbol L}}
\def\Bsp{{\boldsymbol B}}
\def\lsp{{\boldsymbol\ell}}
\def\Ltsp{{\Lsp^2}}
\def\Lpsp{{\Lsp^p}}
\def\Linsp{{\Lsp^{\infty}}}
\def\LtR{{\Lsp^2(\Rst)}}
\def\ltZ{{\lsp^2(\Zst)}}
\def\ltsp{{\lsp^2}}
\def\ltZt{{\lsp^2(\Zst^{2})}}
\def\ninN{{n{\in}\Nst}}
\def\oh{{\frac{1}{2}}}
\def\grass{{\cal G}}
\def\ord{{\cal O}}
\def\dist{{d_G}}
\def\conj#1{{\overline#1}}
\def\ntoinf{{n \rightarrow \infty}}
\def\toinf{{\rightarrow \infty}}
\def\tozero{{\rightarrow 0}}
\def\trace{{\operatorname{trace}}}
\def\ord{{\cal O}}
\def\UU{{\cal U}}
\def\rank{{\operatorname{rank}}}
\def\acos{{\operatorname{acos}}}

\def\SINR{\mathsf{SINR}}
\def\SNR{\mathsf{SNR}}
\def\SIR{\mathsf{SIR}}
\def\tSIR{\widetilde{\mathsf{SIR}}}
\def\Ei{\mathsf{Ei}}
\def\l{\left}
\def\r{\right}
\def\lb{\left\{}
\def\rb{\right\}}

\setcounter{page}{1}

\newcommand{\eref}[1]{(\ref{#1})}
\newcommand{\fig}[1]{Fig.\ \ref{#1}}

\def\bydef{:=}
\def\ba{{\mathbf{a}}}
\def\bb{{\mathbf{b}}}
\def\bc{{\mathbf{c}}}
\def\bd{{\mathbf{d}}}
\def\bee{{\mathbf{e}}}
\def\bff{{\mathbf{f}}}
\def\bg{{\mathbf{g}}}
\def\bh{{\mathbf{h}}}
\def\bi{{\mathbf{i}}}
\def\bj{{\mathbf{j}}}
\def\bk{{\mathbf{k}}}
\def\bl{{\mathbf{l}}}
\def\bm{{\mathbf{m}}}
\def\bn{{\mathbf{n}}}
\def\bo{{\mathbf{o}}}
\def\bp{{\mathbf{p}}}
\def\bq{{\mathbf{q}}}
\def\br{{\mathbf{r}}}
\def\bs{{\mathbf{s}}}
\def\bt{{\mathbf{t}}}
\def\bu{{\mathbf{u}}}
\def\bv{{\mathbf{v}}}
\def\bw{{\mathbf{w}}}
\def\bx{{\mathbf{x}}}
\def\by{{\mathbf{y}}}
\def\bz{{\mathbf{z}}}
\def\b0{{\mathbf{0}}}

\def\bA{{\mathbf{A}}}
\def\bB{{\mathbf{B}}}
\def\bC{{\mathbf{C}}}
\def\bD{{\mathbf{D}}}
\def\bE{{\mathbf{E}}}
\def\bF{{\mathbf{F}}}
\def\bG{{\mathbf{G}}}
\def\bH{{\mathbf{H}}}
\def\bI{{\mathbf{I}}}
\def\bJ{{\mathbf{J}}}
\def\bK{{\mathbf{K}}}
\def\bL{{\mathbf{L}}}
\def\bM{{\mathbf{M}}}
\def\bN{{\mathbf{N}}}
\def\bO{{\mathbf{O}}}
\def\bP{{\mathbf{P}}}
\def\bQ{{\mathbf{Q}}}
\def\bR{{\mathbf{R}}}
\def\bS{{\mathbf{S}}}
\def\bT{{\mathbf{T}}}
\def\bU{{\mathbf{U}}}
\def\bV{{\mathbf{V}}}
\def\bW{{\mathbf{W}}}
\def\bX{{\mathbf{X}}}
\def\bY{{\mathbf{Y}}}
\def\bZ{{\mathbf{Z}}}

\def\bxi{{\boldsymbol{\xi}}}

\def\sT{{\mathsf{T}}}
\def\sH{{\mathsf{H}}}
\def\cmp{{\text{cmp}}}
\def\cmm{{\text{cmm}}}
\def\WPT{{\text{WPT}}}
\def\lo{{\text{lo}}}
\def\gl{{\text{gl}}}

\def\tT{{\widetilde{T}}}
\def\tF{{\widetilde{F}}}
\def\tP{{\widetilde{P}}}
\def\tG{{\widetilde{G}}}
\def\tbh{{\widetilde{\mathbf{h}}}}
\def\tbg{{\widetilde{\mathbf{g}}}}

\def\mA{{\mathbb{A}}}
\def\mB{{\mathbb{B}}}
\def\mC{{\mathbb{C}}}
\def\mD{{\mathbb{D}}}
\def\mE{{\mathbb{E}}}
\def\mF{{\mathbb{F}}}
\def\mG{{\mathbb{G}}}
\def\mH{{\mathbb{H}}}
\def\mI{{\mathbb{I}}}
\def\mJ{{\mathbb{J}}}
\def\mK{{\mathbb{K}}}
\def\mL{{\mathbb{L}}}
\def\mM{{\mathbb{M}}}
\def\mN{{\mathbb{N}}}
\def\mO{{\mathbb{O}}}
\def\mP{{\mathbb{P}}}
\def\mQ{{\mathbb{Q}}}
\def\mR{{\mathbb{R}}}
\def\mS{{\mathbb{S}}}
\def\mT{{\mathbb{T}}}
\def\mU{{\mathbb{U}}}
\def\mV{{\mathbb{V}}}
\def\mW{{\mathbb{W}}}
\def\mX{{\mathbb{X}}}
\def\mY{{\mathbb{Y}}}
\def\mZ{{\mathbb{Z}}}

\def\cA{\mathcal{A}}
\def\cB{\mathcal{B}}
\def\cC{\mathcal{C}}
\def\cD{\mathcal{D}}
\def\cE{\mathcal{E}}
\def\cF{\mathcal{F}}
\def\cG{\mathcal{G}}
\def\cH{\mathcal{H}}
\def\cI{\mathcal{I}}
\def\cJ{\mathcal{J}}
\def\cK{\mathcal{K}}
\def\cL{\mathcal{L}}
\def\cM{\mathcal{M}}
\def\cN{\mathcal{N}}
\def\cO{\mathcal{O}}
\def\cP{\mathcal{P}}
\def\cQ{\mathcal{Q}}
\def\cR{\mathcal{R}}
\def\cS{\mathcal{S}}
\def\cT{\mathcal{T}}
\def\cU{\mathcal{U}}
\def\cV{\mathcal{V}}
\def\cW{\mathcal{W}}
\def\cX{\mathcal{X}}
\def\cY{\mathcal{Y}}
\def\cZ{\mathcal{Z}}
\def\cd{\mathcal{d}}
\def\Mt{M_{t}}
\def\Mr{M_{r}}
\def\O{\Omega_{M_{t}}}
\newcommand{\figref}[1]{{Fig.}~\ref{#1}}
\newcommand{\tabref}[1]{{Table}~\ref{#1}}

\newcommand{\fb}{\tx{fb}}
\newcommand{\nf}{\tx{nf}}
\newcommand{\BC}{\tx{(bc)}}
\newcommand{\MAC}{\tx{(mac)}}
\newcommand{\Pout}{p_{\mathsf{out}}}
\newcommand{\nnn}{\nn\\}
\newcommand{\FB}{\tx{FB}}
\newcommand{\TX}{\tx{TX}}
\newcommand{\RX}{\tx{RX}}
\renewcommand{\mod}{\tx{mod}}
\newcommand{\m}[1]{\mathbf{#1}}
\newcommand{\td}[1]{\tilde{#1}}
\newcommand{\sbf}[1]{\scriptsize{\textbf{#1}}}
\newcommand{\stxt}[1]{\scriptsize{\textrm{#1}}}
\newcommand{\suml}[2]{\sum\limits_{#1}^{#2}}
\newcommand{\sumlk}{\sum\limits_{k=0}^{K-1}}
\newcommand{\eqhsp}{\hspace{10 pt}}
\newcommand{\tx}[1]{\texttt{#1}}
\newcommand{\Hz}{\ \tx{Hz}}
\newcommand{\sinc}{\tx{sinc}}
\newcommand{\diag}{\mathrm{diag}}
\newcommand{\MAI}{\tx{MAI}}
\newcommand{\ISI}{\tx{ISI}}
\newcommand{\IBI}{\tx{IBI}}
\newcommand{\CN}{\tx{CN}}
\newcommand{\CP}{\tx{CP}}
\newcommand{\ZP}{\tx{ZP}}
\newcommand{\ZF}{\tx{ZF}}
\newcommand{\SP}{\tx{SP}}
\newcommand{\MMSE}{\tx{MMSE}}
\newcommand{\MINF}{\tx{MINF}}
\newcommand{\RC}{\tx{MP}}
\newcommand{\MBER}{\tx{MBER}}
\newcommand{\MSNR}{\tx{MSNR}}
\newcommand{\MCAP}{\tx{MCAP}}
\newcommand{\vol}{\tx{vol}}
\newcommand{\ah}{\hat{g}}
\newcommand{\tg}{\tilde{g}}
\newcommand{\teta}{\tilde{\eta}}
\newcommand{\heta}{\hat{\eta}}
\newcommand{\uh}{\m{\hat{s}}}
\newcommand{\eh}{\m{\hat{\eta}}}
\newcommand{\hv}{\m{h}}
\newcommand{\hh}{\m{\hat{h}}}
\newcommand{\Po}{P_{\mathrm{out}}}
\newcommand{\Poh}{\hat{P}_{\mathrm{out}}}
\newcommand{\Ph}{\hat{\gamma}}
\newcommand{\mat}[1]{\begin{matrix}#1\end{matrix}}
\newcommand{\ud}{^{\dagger}}
\newcommand{\C}{\mathcal{C}}
\newcommand{\nn}{\nonumber}
\newcommand{\nInf}{U\rightarrow \infty}

\title{Realizing Quantum Wireless Sensing Without Extra Reference Sources: 
Architecture, Algorithm, and Sensitivity Maximization 
}

\author{{Mingyao Cui, Qunsong Zeng, Zhanwei Wang, and Kaibin Huang,~\IEEEmembership{Fellow, IEEE}\vspace{-7mm}}
\thanks{The authors are with Department of Electrical and Electronic Engineering, The University of Hong Kong, Hong Kong (Emails: \{mycui, qszeng, zhanweiw, huangkb\}@eee.hku.hk). Corresponding authors: Q. Zeng; K. Huang.}}



\maketitle


\begin{abstract} 
Rydberg Atomic REceivers (RAREs) have demonstrated remarkable capabilities for radio-frequency signal measurement, enabling advanced quantum wireless sensing. Existing RARE-based sensing systems 
popularly adopt the heterodyne detection methodology, which requires an additional
reference source to serve as an \emph{atomic mixer}. However, this approach 
entails a bulky transceiver architecture and is limited in the supportable 
sensing bandwidth. To address these limitations, we propose a self-heterodyne sensing paradigm where the transmitter's self-interference naturally provides the reference signal.  We demonstrate that a self-heterodyne RARE functions as an \emph{atomic autocorrelator}, eliminating the need for external reference sources while supporting substantially wider bandwidth than conventional heterodyne methods. Next, a two-stage algorithm is devised to perform target ranging in self-heterodyne RARE systems. This algorithm is shown to closely approach the Cramér-Rao lower bound. Furthermore, we introduce the power-trajectory ($P$-trajectory) design for RAREs, which maximizes the sensing sensitivity through time-varying transmission power control. An internal noise (ITN)-limited $P$-trajectory is developed to capture the profile of the asymptotically optimal time-varying power in the presence of ITN only. 
This design is then extended to the practical $P$-trajectory by incorporating both the ITN and external noise. Numerical results validate that the proposed self-heterodyne sensing can achieve a $\sim100\:{\rm MHz}$-level bandwidth with high sensitivity, substantially surpassing existing heterodyne counterparts and paving the way for high-resolution quantum wireless sensing.
\end{abstract} 

\begin{IEEEkeywords}
Rydberg atomic receivers, wireless communications, quantum sensing.
\end{IEEEkeywords}
\vspace{-2mm}
\section{Introduction}

The precise measurement of radio-frequency (RF) signals is fundamental to the digital age, underpinning critical applications in wireless communication, remote sensing, and e-health.
Rydberg Atomic Receivers (RAREs), a concept originating from quantum sensing, have emerged as a promising technology for high-precision wireless detection by leveraging the quantum properties of Rydberg atoms~\cite{RydMag_Liu2023, RydbMag_Cui2024, RydMag_Fancher2021}. 
These atoms are highly excited atoms in which one or more electrons have transitioned from their ground-state energy level to a higher energy state. 
Due to their large electric dipole moments, Rydberg atoms can strongly interact with incident RF signals, triggering electron transitions between resonant energy levels~\cite{RydMag_Fancher2021}.
Capitalizing on these transitions, RAREs can capture the amplitude, frequency, phase, and polarization of RF signals with unparalleled precision.
This capability positions RAREs as potential successors to, or collaborators with, conventional RF receivers for next-generation wireless systems~\cite{AtomicMIMO_Cui2025, RydChen_Gong2025, zhang_rydberg_2024, QuanSense_Zhang2023,Shawei2025}.

The principal advantages of RAREs include exceptional sensitivity and an ultra-wide range of detectable frequency.  
The large electric dipole moments of Rydberg atoms can significantly amplify the incoming RF signals~\cite{cai_sensitivity_2023}, enabling sensitivities that surpass those of classical receivers. 
Theoretical analysis based on the standard quantum limit (SQL) predicts a fundamental sensitivity for RAREs on the order of $10\:{\rm pV/cm/\sqrt{Hz}}$~\cite{fan_atom_2015}, which is two orders of magnitude higher than the approximate limit of traditional technology, $1\:{\rm nV/cm/\sqrt{Hz}}$~\cite{Resonant_San2024}. 
Additionally, a RARE can detect signals across a vast frequency spectrum, spanning from Megahertz (MHz) to Terahertz (THz) bands~\cite{zhang_rydberg_2024,Multiband_Cui2025}. 
This wide tunability is achieved by exploiting the electron transitions among different energy levels. For example, experiments in \cite{RydMultiband_Meyer2023} have demonstrated the simultaneous detection of 5 distant frequencies spanning from 1.7 GHz to 116 GHz using a single RARE.
The detectable frequency range of RAREs can be further expanded by utilizing various quantum techniques to split atomic energy levels, such as the Zeeman effect and AC Stark shift~\cite{shi_tunable_2023, hu_continuously_2022,RydAM_Zhen2019}.


Research in RARE-enabled wireless sensing has progressed significantly. 
Initial efforts demonstrated the detection of RF signal amplitude and frequency using electromagnetically-induced-transparency (EIT) spectroscopy~\cite{RydPhase_Meyer2018, RydAMFM_Anderson2021, jia_properties_2024}. A major advancement came with the development of \emph{heterodyne sensing}, which unlocked the capability to measure signal phase~\cite{RydPhase_Simons2019, RydNP_Jing2020}. 
This technique introduces an additional reference source to transmit known reference signals to the RARE. The reference signal inside the atomic vapor cell acts as an \emph{atomic mixer} that down-converts the target RF signal to an intermediate frequency, thereby facilitating phase recovery~\cite{gao_rydberg_2025}. Owing to its high sensitivity and simplified detection procedure, heterodyne sensing has become the mainstream approach for RARE-based wireless sensing.  
It was applied to subtle vibration monitoring, where RAREs have been shown to improve sensing accuracy by at least an order of magnitude compared to traditional Wi-Fi sensing~\cite{QuanSense_Zhang2023}.
Another key application of heterodyne sensing is angle-of-arrival (AoA) detection using RARE arrays \cite{RydAOA_Robinson2021,CE_Ryd2025}. 
Compatible signal processing techniques, like multiple signal classification 
(MUSIC) and rotational invariant techniques (ESPRIT), have been adapted to enhance AoA estimation accuracy~\cite{QWC_kim2025,RydESPRIT_Gong2025}. Further innovations continue to expand the RARE toolkit. 
{\color{black} A homodyne sensing technology that employs zero intermediate frequency between reference and sensing signals was developed in \cite{QWS_chen2025} to conduct radar ranging. Meanwhile, the concept of integrated sensing and communications based on  RAREs was introduced in~\cite{QWSMag_chen2025}.  
The recent study~\cite{Multiband_kim2025} also exploits the vast energy-level resources of Rydberg atoms to facilitate multi-band sensing.
}

Despite these advancements, existing heterodyne-sensing systems are hindered by two critical limitations that impede their deployment in practical wireless sensing systems. 
First, the requirement for a dedicated reference source results in a bulky and costly hardware architecture, increased power consumption, and the need for complex self-interference cancellation between the transmitter and the receiver.
{\color{black} Second, despite the wide detectable frequency range of RAREs, their supportable \emph{instantaneous bandwidth} is inherently narrow.
This bandwidth characterizes the minimum signal pulse duration that can trigger a measurable response on the receiver~\cite{IF_IB}. It is fundamentally limited by the finite response time of the quantum state transition,  typically resulting in a bandwidth of less than~$10\:{\rm MHz}$~\cite{yang_highly_2024, RydPhase_Meyer2018}.}
Consequently, this characteristic restricts the maximum bandwidth of the sensing signal and ultimately caps the achievable sensing resolution. Together, these limitations present a significant challenge for quantum heterodyne sensing in supporting future high-resolution wireless applications.

To overcome these limitations, we propose a novel sensing paradigm termed \emph{self-heterodyne sensing}. The core idea is to \emph{leverage the self-interference from the transmitted sensing signal as the inherent reference}, thereby eliminating the need for a dedicated reference source. 
The main contributions of our work are summarized below.
\begin{itemize}
    \item \textbf{Architecture and property of self-heterodyne sensing}: 
    We introduce a new RARE architecture that removes the external reference source and the associated transmitter-receiver decoupling circuitry.
    The key innovation is the use of a single transmitted linear-frequency modulated (LFM) signal that acts dually as both the sensing signal and its own reference---hence giving the name 
    ``self-heterodyne sensing". 
    Our analysis reveals a fundamental property: the self-heterodyne RARE functions as an \emph{atomic autocorrelator}. In this regime, the received signal constitutes the \emph{autocorrelation of the transmitted LFM waveform at different time delays}. This property effectively maps the target's range to a distinct frequency at the receiver, enabling high-precision quantum range sensing via frequency estimation. The second critical property is the inherent control over the received signal's instantaneous bandwidth. By using identical LFM waveforms for sensing and reference, the received signal's bandwidth can be precisely tailored to fit within the narrow detectable bandwidth of a RARE. This is a key advantage over conventional heterodyne sensing, which employs single-tone references and lacks such a control mechanism.

    \item \textbf{Algorithmic design for self-heterodyne sensing}:  We formulate the range sensing problem as a nonlinear least-square (NLS) optimization. To solve this efficiently, we develop a two-stage algorithm. The first stage performs coarse estimation by exploiting the slow variation of the signal amplitude compared to its phase. It directly extracts the target range from the principal frequency component of the received signal. The second stage refines this initial estimate using Newton's method, which explicitly accounts for the time-varying signal amplitude induced by the dynamic sensitivity of Rydberg atoms. Numerical results demonstrate that the estimation accuracy of the proposed algorithm approaches the Cramér-Rao lower bound (CRLB).

    \item \textbf{Sensitivity maximization for self-heterodyne sensing}: 
    We propose a novel \emph{power-trajectory} ($P$-trajectory) optimization technique, aimed at maximizing the sensitivity of 
    RARE by manipulating the time-varying transmission 
    power. 
    This strategy is inspired by the finding that the gain of a self-heterodyne RARE exhibits a complex, non-monotonic relationship with respect to (w.r.t) the transmission power, rendering conventional \emph{fixed power} transmission sub-optimal. 
    Our design proceeds in two steps: First, we design an internal noise (ITN)-limited  $P$-trajectory by 
    identifying the asymptotically optimal power profile in the presence of ITN only. 
    We prove that the optimal power should vary \emph{linearly} with the instantaneous carrier frequency in the off-resonant regime, but remain \emph{constant} in the resonant regime. Second, we refine this ITN-limited trajectory to account for both internal and external noise (ETN) using the Primal-Dual Subgradient (PDS) method. Numerical results confirm that the optimized $P$-trajectory significantly enhances the system's sensing accuracy.

\end{itemize}

The remainder of the paper is organized as follows. The classic and quantum wireless sensing systems are reviewed in Section~\ref{sec:2}. The architecture and property of self-heterodyne sensing are introduced in Section~\ref{sec:3}, followed by the two-stage self-heterodyne sensing algorithm in Section~\ref{sec:4}. Section~\ref{sec:5} elaborates on $P$-trajectory design for sensitivity maximization. Numerical experiments are conducted in Section~\ref{sec:6}, and conclusions are drawn in Section~\ref{sec:7}.

\vspace{-1mm}
\section{Review of Existing Wireless Sensing Paradigms}  \label{sec:2}
{This section provides preliminaries useful for understanding the proposed 
designs. {\color{black} Throughout the paper, we focus on the quasi-monostatic sensing scenario, where the transmitter and receiver are closely deployed but not perfectly co-located due to different transceiver architectures. The transmitter-receiver separation is intentionally kept small (20$\sim$100~cm), which is negligible compared to the range of sensing targets (100~m$\sim$100~km).}
We start with an introduction to the basic principles of classic 
wireless sensing for the purpose of comparison. We then introduce the electron 
	transition model and the heterodyne 
	sensing technique utilized in existing quantum wireless sensing systems. 
    }

\begin{figure}
    \centering
    \includegraphics[width=3.5in]{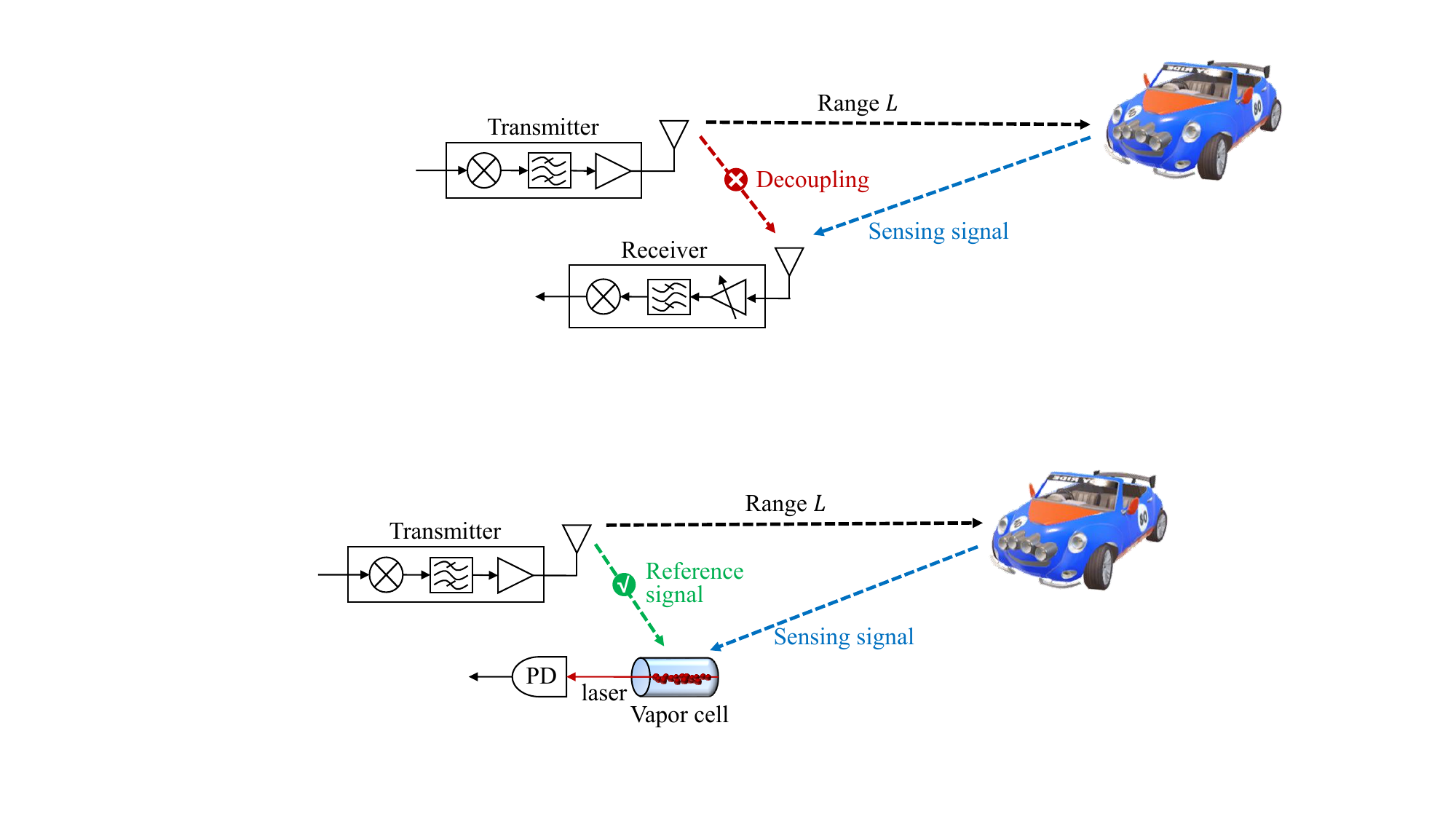}
    \vspace*{-0.5em}
    \caption{Classic wireless sensing system.} 
	\vspace*{-1em}
	\label{img:ClassicSensing}
\end{figure}

\vspace{-1mm}
\subsection{Review of Classic Wireless Sensing}\label{sec:2.1}

The architecture of the classic quasi-monostatic sensing system is depicted in Fig.~\ref{img:ClassicSensing}.
The transmitter employs the LFM waveform to sense the target 
range~\cite{Radar}, expressed as:
\begin{align}\label{eq:LFM}
    s_0(t) = \sqrt{P_{\rm tx }}e^{j\vartheta_0(t)},\:\:0\le t \le T
\end{align}
where the phase modulation function is given as $\vartheta_0(t) = \frac{1}{2}\alpha t^2 + \omega_0 t$. 
The parameters $P_{\rm tx}$, $T$, $\omega_0$, and $\alpha$ represent the transmission power, symbol duration, start frequency, and sweep slope of the LFM waveform.
The sweep bandwidth of the LFM waveform is defined as $B = \frac{\alpha T}{2\pi}$.

Let $L > 0$ denote the target range to be estimated,  which yields a roundtrip propagation delay of $\tau = \frac{2L}{c}$, with $c$ representing the speed of light. 
The roundtrip channel fading that converts the transmitted signal into the incident Poynting vector is accordingly modeled as $h = \sqrt{\frac{A_{\rm c} }{16\pi^2 L^4}}$. where $\lambda_0 = \frac{2\pi c}{\omega_0}$ is the wavelength corresponding to the start frequency $\omega_0$, $A_{\rm c}$ the cross-section area of the target, and $G_{\rm tx}$ the transmit antenna gain~\cite{malanowski_analysis_2014}. 
At the classic receiver, the received baseband complex signal is expressed as:
\begin{align}\label{eq:y0}
    y_0(t) = \sqrt{P_{\rm tx}G_0A_{\rm e}}he^{j\vartheta_0(t - \tau)} + n_0(t), 
\end{align}
where $G_0 = G_{\rm tx}G_{\rm rx} G_{\rm lna}$ encompasses the gains of the transmit antenna, $G_{\rm tx}$, receive antenna, $G_{\rm rx}$, and the low-noise amplifier (LNA), $G_{\rm rec}$. $A_{\rm e}$ characterizes the antenna aperture. 
The term $n_0(t)$ denotes zero-mean Gaussian noise with the property: $\mathrm{E}(n_0(t)n_0^*(t'))=\sigma_0^2\delta(t - t')$, whose spectrum density is $\sigma_0^2=k_BT_{\rm E}$ with $k_B$ being the Boltzmann constant and $T_{\rm E}$ the noise temperature. Accordingly, the signal-to-noise ratio (SNR) of $y_0(t)$ is given as 
\begin{align}\label{eq:SNR0}
	{\rm SNR}_0 = \frac{P_{\rm tx}G_0A_eh^2}{\sigma_0^2}.
\end{align}

In the literature, the target range $L$ can be estimated from the received signal $y_0(t)$ using a variety of signal processing techniques, such as time-frequency analysis~\cite{boashash_estimating2_1992} and NLS methods~\cite{nandi_asymptotic_2004}. 
It is worth noting that the symbol duration, $T$, is typically much longer than the roundtrip propagation time, $\tau$, (e.g., $T = 1 {\rm ms}$ and $\tau = 1\mu {\rm s}$ for $R= 150$m). 
This disparity implies that the receiver must operate in a ``listen-while-transmit'' mode, necessitating proper signal decoupling between the transmitter and receiver to mitigate the strong self-interference from the transmitted signal.



\subsection{Review of Quantum Wireless Sensing}
\begin{figure}
    \centering
    \includegraphics[width=3.5in]{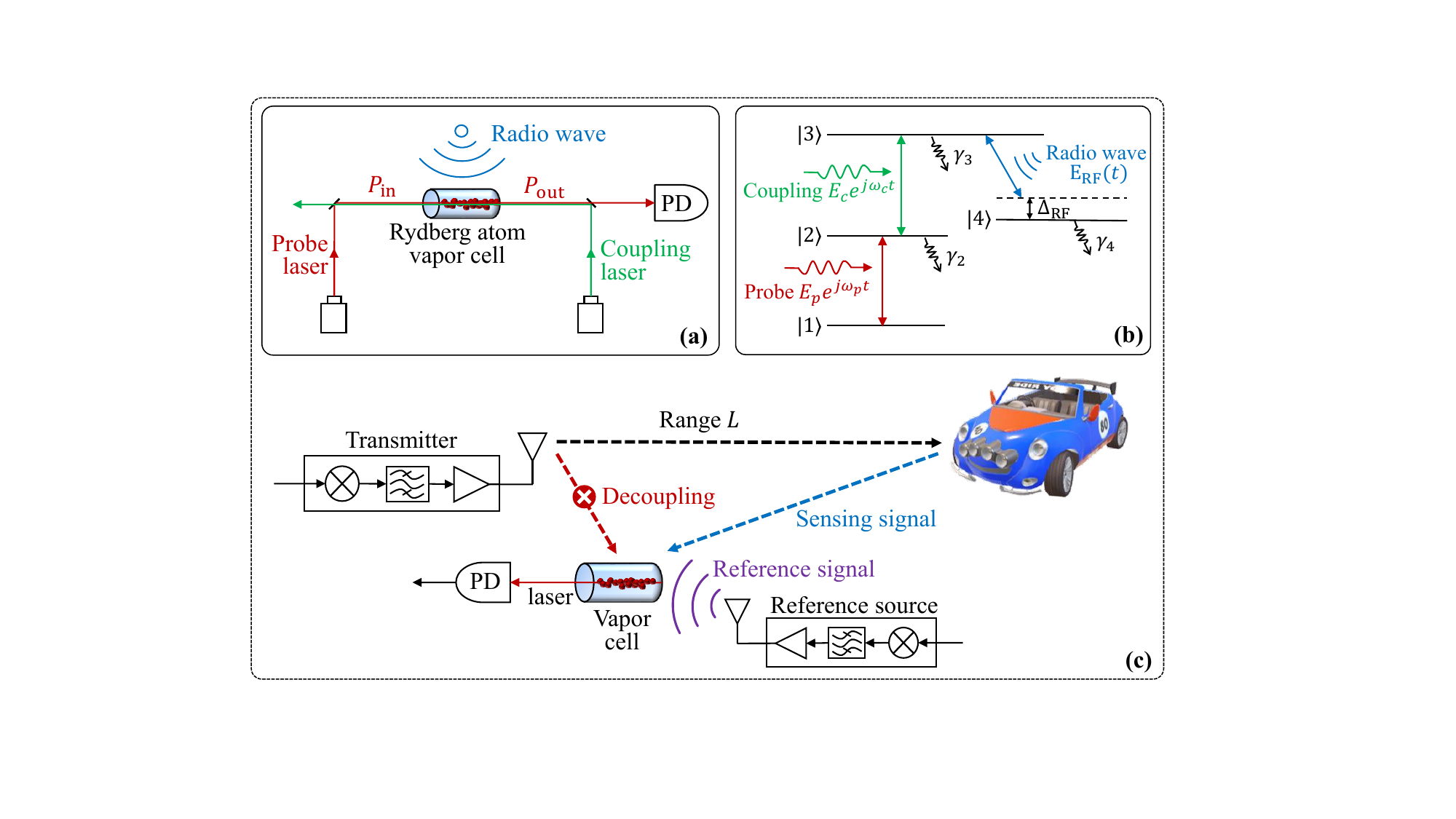}
    \vspace*{-0.5em}
    \caption{(a) Structure of a RARE; (b) 
    the 
    four-level quantum system; (c) the quantum heterodyne sensing system. } 
	\vspace*{-1em}
	\label{img:HeterodyneSensing}
\end{figure}

The quantum wireless sensing system uses a RARE in place of traditional RF 
receivers to capture incident RF signals. 

\subsubsection{Electron transition model}
We first review the electron transition model of RARE. 
Fig.~\ref{img:HeterodyneSensing}(a) presents a typical RARE structure. A probe laser and a coupling laser propagate in opposite directions through an alkali atom vapor cell to prepare Rydberg atoms. The incident RF signal changes the quantum states of Rydberg atoms.
This interaction results in variations in the power loss of the probe laser as it passes through the vapor cell.
By monitoring these variations using a photodetector (PD), the RF signal can be recovered, thereby enabling target sensing.

\emph{i) Quantum response of the four-level system:} 
The quantum response of each Rydberg atom can be characterized by the four-level quantum system in Fig.~\ref{img:HeterodyneSensing}(b). It consists of a ground state $\ket{1}$, a lowly-excited state $\ket{2}$, and two Rydberg states $\ket{3}$ and $\ket{4}$.
These states are interconnected by the probe laser, the coupling laser, and the incident RF signal, respectively. 
Let $\omega_{ij}$ denote the transition frequency of $\ket{i} \rightarrow \ket{j}$, where $i,j\in\{1,2,3,4\}$. 
The probe laser is characterized by a Rabi frequency, $\Omega_{\rm p}$, a carrier frequency, $\omega_{\rm p}$, and a frequency detuning, $\Delta_{\rm p}$. 
The Rabi frequency $\Omega_{\rm p}$ quantifies the strength of the interaction between the probe laser and the atomic transition. 
The frequency detuning, $\Delta_{\rm p}$, represents the deviation of the carrier frequency, $\omega_{\rm p}$, from the transition frequency $\omega_{12}$, i.e., $\Delta_{\rm p} = \omega_{\rm p} - \omega_{12}$. 
Likewise, we introduce $\{\Omega_{\rm c}, \omega_{\rm c}, \Delta_{\rm c} = \omega_{\rm c} - \omega_{23}\}$ for the coupling laser and $\{\Omega_{\rm RF}, \omega_{\rm RF}, \Delta_{\rm RF} = \omega_{\rm RF} - \omega_{34}\}$ for the RF signal. 
In this study, both the probe and coupling lasers are set resonant  with their respective electron transitions, such that $\Delta_{\rm p} = \Delta_{\rm c} = 0$.

The quantum state of the considered four-level system is a density matrix
$\boldsymbol{\rho}\in\mathbb{C}^{4\times 4}$. The dynamics of $\boldsymbol{\rho}$ is governed by the Lindblad master equation \cite{RydNP_Jing2020}: 
\begin{align}\label{eq:lindblad}
    \frac{\partial \boldsymbol{\rho}}{\partial t} = -\frac{j}{\hbar}[{\mb{H}}, \boldsymbol{\rho}] + \mathcal{L},
\end{align}
where $[\mb{H},\boldsymbol{\rho}]=\mb{H}\boldsymbol{\rho}-\boldsymbol{\rho}\mb{H}$ represents the commutator of the matrix $\mb{H}$ with the matrix $\boldsymbol{\rho}$. The matrix, ${\mb{H}}$, refers to the system Hamiltonian operator
\begin{align}\small\label{eq:hamiltonian}
    {\mb{H}} = \frac{\hbar}{2}\left[
    \begin{array}{cccc}
         0& {\Omega_{\rm p}} &0 &0   \\
         {\Omega_{\rm p}}& 0 & {\Omega_{\rm c}} &0 \\
         0 &{\Omega_{\rm c}} &0 & {\Omega_{\rm RF}} \\
         0 & 0 &  {\Omega_{\rm RF}} & -2\Delta_{\rm RF}
    \end{array}
    \right],
\end{align}
where $\hbar$ denotes the reduced Plank constant.
The operator $\mathcal{L}$ indicates the relaxation of the system,  
\begin{small}
\begin{align}\label{eq:decay}
    \mathcal{L} = \left[
    \begin{array}{cccc}
         \gamma_2\rho_{22} + \gamma_4 \rho_{44}& -\frac{\gamma_2}{2}\rho_{12} & -\frac{\gamma_3}{2}\rho_{13} & -\frac{\gamma_4}{2}\rho_{14}   \\
         -\frac{\gamma_2 }{2}\rho_{21}& 
         \gamma_3\rho_{33} - \gamma_2\rho_{22}
         & -\frac{\gamma_{23}}{2}\rho_{23} &
         -\frac{\gamma_{24} }{2}\rho_{24}\\
         -\frac{\gamma_3 }{2}\rho_{31} &
         -\frac{\gamma_{23}}{2}\rho_{32} &
         -\gamma_3\rho_{33} &
         -\frac{\gamma_{34}}{2}\rho_{34} \\
         -\frac{\gamma_4}{2}\rho_{41} & -\frac{\gamma_{24}}{2}\rho_{42} & - \frac{\gamma_{34}}{2}  \rho_{43} & -\gamma_4 \rho_{44}
    \end{array}
    \right] \notag
\end{align}
\end{small}
Here, $\gamma_{ij} = (\gamma_i + \gamma_j)$, where $\gamma_i\: (i = 2, 3, 4)$ is the decay rate.
We are interested in the steady-state (when $\frac{\partial \boldsymbol{\rho}}{\partial t} = 0$) solution of $\rho_{12}$ as it is associated with the probe laser to be measured, which is proven to be~\cite{RydModel_RuiNi2023}
\begin{align}
    \rho_{12} = \frac{A_1\Omega_{\rm RF}^2\Delta_{\rm RF}^2 + 
    jB_1\Omega_{\rm RF}^4}{C_1\Omega_{\rm RF}^4 + C_2\Omega_{\rm RF}^2 + 
    C_3\Delta_{\rm RF}^2}.
\end{align}
Here, $A_1 = 2\Omega_{\rm p}\Omega_{\rm c}^2$, $B_1 = \gamma_2\Omega_{\rm p}$, $C_1 = 2\Omega_{\rm p}^2 + \gamma_2^2$, $C_2 = 2\Omega_{\rm p}^2(\Omega_{\rm c}^2 + \Omega_{\rm p}^2)$, and $C_3 = 4(\Omega_{\rm c}^2 + \Omega_{\rm p}^2)^2$. 

\emph{ii) Optical readout:} 
According to the adiabatic approximation, the power of the probe laser coming out of the vapor cell is determined by the imaginary part of $\rho_{12}$~\cite{RydNP_Jing2020}:
\begin{align}
    P_{\rm out} = P_{\rm in}\exp(-{C_0\Im{\rho_{12}}}),
\end{align} where $P_{\rm in}$ denotes the input power of probe laser. The constant $C_0$ is given as $C_0 \overset{\Delta}{=} \frac{2N_0\mu_{12}^2k_pd}{\epsilon_0\hbar\Omega_{\rm p}}$, where $N_0$ represents the total density of atoms, $\mu_{12}$ the transition dipole moment of $\ket{1}\rightarrow\ket{2}$, $\epsilon_0$ the vacuum permittivity, $d$ the length of vapor cell, and $k_p$ the wavenumber of probe laser. {\color{black}The PD captures the probe laser and converts it into photocurrent:
$I_{\rm out} = \frac{q\eta}{\hbar \omega_{\rm p}}P_{\rm out}
$, where $\eta$ is the quantum efficiency of the PD, and $q$ is the elementary charge. 
An output voltage is produced from the photocurrent via a trans-impedance amplifier (TIA) for measurement 
\begin{align}
    V_{\rm out} = R_{\rm T}I_{\rm out} \overset{\Delta}{=} V_{\rm in}\exp(-C_0\Im{\rho_{12}}),  
\end{align}
where $R_{\rm T}$ is the load impedance. 
 The equivalent input voltage, $V_{\rm in} \overset{\Delta}{=} \frac{R_{\rm T}q\eta}{\hbar \omega_{\rm p}}P_{\rm in}$, is introduced to ease expression, which characterizes the voltage induced by the input power of probe laser.
For brevity, we introduce the function 
\begin{align}\label{eq:Pi}
    \Pi(\Omega, \Delta) \overset{\Delta}{=} V_{\rm in} \exp\left\{
    -\frac{B_1C_0\Omega^4}{C_1\Omega^4 + C_2\Omega^2 + C_3\Delta^2}
    \right\}, 
\end{align}
with $\Omega \in [0, +\infty)$ and $\Delta \in \mathbb{R}$, to rewrite the output voltage $V_{\rm out}$ as 
${\Pi}(\Omega_{\rm RF}, \Delta_{\rm RF})$, and define the partial derivative of $\Pi(\Omega, \Delta)$ w.r.t $\Omega$ as 
\begin{align}\small\label{eq:Upsilon}
    &\Upsilon(\Omega, \Delta) \overset{\Delta}{=} \frac{\partial \Pi(\Omega, \Delta)}{\partial 
\Omega} =  -2V_{\rm in} B_1C_0\cdot \\&\exp\left\{
    -\frac{B_1C_0\Omega^4}{C_1\Omega^4 + C_2\Omega^2 + C_3\Delta^2}
    \right\} \frac{\Omega^3(C_2\Omega^2 + 2C_3\Delta^2)}{(C_1\Omega^4 + C_2\Omega^2 + C_3\Delta^2)^2}.\notag
\end{align} 
} 


\subsubsection{Heterodyne sensing}
{\color{black} To accurately extract target information from the RF signal, the heterodyne sensing technique is commonly employed~\cite{RydNP_Jing2020, RydPhase_Simons2019} as it is more suitable for weak-signal detection than the EIT-based detection~\cite{AtomicMIMO_Cui2025}. }
This technique,  as illustrated in 
Fig.~\ref{img:HeterodyneSensing}(c), introduces an additional reference source near the RARE. 
The reference source emits a reference signal to the RARE, which interferes with the sensing 
signal reflected by the target to form the incident RF signal. Notably, the reference signal needs to be much stronger than the sensing signal and its waveform should be known to the RARE. 

\emph{i) Incident electric field model:} 
In the literature, the reference field is usually a sinusoidal signal, used to lock the transition state of RARE. 
Let $\omega_{\rm r}$ be the carrier frequency of the reference signal. 
Then, the electric field component [unit:~$\rm{V/m}$] of the incident reference signal is expressed as\footnote{\color{black} We employ a point receiver model to characterize the incident field, which neglects the spatial variation of electric field across the vapor cell. This simplification is justified because the cell length we considered ($L = 2$~cm) is substantially smaller than the field's wavelength ($d \approx 10\:{\rm cm}$), ensuring minimal phase and amplitude variation across the vapor volume.} 
\begin{align}
    E_{\rm r}(t) = E_{\rm r}e^{j(\omega_{\rm r} t - \omega_{\rm r} \tau')},
\end{align}
where $E_{\rm r} > 0$ represents the field strength and $\tau'$ the reference-to-RARE propagation delay. 
Therein, the 
frequency detuning of $E_{\rm r}(t)$ is calculated as 
$\Delta_{\rm r} = \omega_{\rm r} - \omega_{34}$. The Rabi frequency of $E_{\rm r}(t)$, is determined by the product of transition dipole moment and field strength: 
$\Omega_{\rm r} = \frac{\mu_{34}}{\hbar}|E_{\rm r}e^{j(  \omega_{\rm r}t -   \omega_{\rm r}\tau')}| = 
		\frac{\mu_{34}}{\hbar}E_{\rm r}$, 
		where $\mu_{34}$ stands for the 
dipole moment of the transition $\ket{3}\rightarrow\ket{4}$. 

As for the sensing signal, its waveform can be properly configured to deal with different sensing tasks. For this purpose, we denote its time-varying phase modulation function in a general form as $\vartheta_{\rm s}(t)$. 
Thereby, the electric field component of the received sensing signal is modeled as  
\begin{align}
	E_{\rm s}(t) = E_{\rm s} e^{j\vartheta_{\rm s}(t-\tau)},
\end{align}
where $E_{\rm s} > 0$ denotes the field strength and $\tau$ the roundtrip propagation 
time to be estimated. Likewise, the Rabi frequency of $E_{\rm s}(t)$  
is 
 $\Omega_{\rm s}= \frac{\mu_{34}}{\hbar}|{E_{\rm s} e^{j\vartheta_{\rm s}(t-\tau)}}| = \frac{\mu_{34}}{\hbar}E_{\rm s} $.  
 
As a result, the overall incident RF field is the superposition of the reference and 
sensing signals: 
\begin{align}\label{eq:hs}
E_{\rm RF}(t) = E_{\rm r} e^{j(  \omega_{\rm r}t -   \omega_{\rm r}\tau')}+ E_{\rm s} e^{j\vartheta_{\rm s}(t - \tau)}. 
\end{align}
The time-varying phase difference between $E_{\rm s}(t)$ and $E_{\rm r}(t)$ leads to a time-varying amplitude of the composite field, $E_{\rm RF}(t)$. This further results in a time-varying Rabi frequency associated with the overall RF field: 
%
%
\begin{align}\label{eq:Omega_RF}
    \Omega_{\rm RF}(t) &=\frac{\mu_{34}}{\hbar} \left| {E_{\rm r}}e^{j(  \omega_{\rm r}t - 
      \omega_{\rm r}\tau')}+ {E_{\rm 
    		s}}e^{j\vartheta_{\rm s}(t - \tau)}\right| \notag \\
    &=|\Omega_{\rm r} + \Omega_{\rm s}e^{j\vartheta_{\rm sr}(t; \tau, \tau')}|.
\end{align}
Here, $\vartheta_{\rm sr}(t; \tau, \tau')\overset{\Delta}{=}\vartheta_{\rm s}(t-\tau) - 
  \omega_{\rm r}t  
+   \omega_{\rm r}\tau'$ characterizes the phase difference function. 
Moreover, as the reference signal is much stronger than the sensing signal, $E_r \gg E_s$ ($\Omega_r \gg \Omega_s$), the frequency detuning, $\Delta_{\rm RF}$, of the overall field is 
largely determined by the reference component: 
\begin{align}\label{eq:Delta_RF}
	\Delta_{\rm RF} = 
	\Delta_r = \omega_{\rm r} - \omega_{34}.
\end{align}

\emph{ii) Output voltage and noise:} 
By substituting \eqref{eq:Omega_RF} and \eqref{eq:Delta_RF} into the output voltage, 
$V_{\rm out} = {\Pi}(\Omega_{\rm RF}, \Delta_{\rm RF})$, and taking the noise into account, the measured signal is given as
\begin{align}\label{eq:signal_model}
    &y(t) =
    \Pi(|\Omega_{\rm r} + \Omega_{\rm s}e^{j\vartheta_{\rm sr}(t; \tau, \tau')}|, \Delta_r) + n(t) 
     \\
    &\overset{(a)}{\approx} \underbrace{\Pi(\Omega_{\rm r}, \Delta_r)}_{\rm bias} + 
    \underbrace{\frac{\mu_{34}}{\hbar}\Upsilon(\Omega_{\rm r}, 
    \Delta_r)}_{\rm intrinsic\:gain}E_s\underbrace{\cos(\vartheta_{\rm sr}(t; 
    \tau, 
    \tau'))}_{\rm oscillation\:term} 
    + \underbrace{n(t)}_{\rm noise}, \notag
    \end{align}
where linearization (a) comes from the first-order Taylor expansion 
of $\Pi(\Omega_{\rm RF}, \Delta_r)$ when $\Omega_r \gg \Omega_s$. The terms $\Pi(\Omega_{\rm r}, \Delta_r)$ and $\frac{\mu_{34}}{\hbar}\Upsilon(\Omega_{\rm r}, 
    \Delta_r)$ represent the direct-current bias of the measured voltage and the intrinsic gain of RARE, respectively. 
{\color{black} 
The noise $n(t)$, with autocorrelation $\mathrm{E}(n(t')n(t))=\sigma^2\delta(t - t')$, originates from both external and internal sources. The external noise (ETN) arises from the blackbody radiation and vacuum fluctuation. Its field intensity is given by $\langle E_I^2 \rangle = \frac{\hbar\omega_0^3}{\pi \epsilon_0 c^3} (2n_{\rm th} + 1)$, where $n_{\rm th} = 1/\left(e^{\hbar\omega_0/k_BT_E} - 1\right)$ represents the Bose-Einstein distribution and $\epsilon_0$ the dielectric constant~\cite{Multiband_Cui2025}. 
The ETN corrupts the sensing signal in free space and is thus amplified by the intrinsic gain of the RARE system. 
The internal noise (ITN) is dominated by the photon shot noise of the PD, which scales proportionally with the direct-current bias. 
As a result, the spectral density of $n(t)$ is given by the superposition of the ETN and ITN contributions, $\sigma_{\rm etn}^2$ and $\sigma_{\rm itn}^2$ respectively: 
\begin{align}\label{eq:noise_hs}
    \sigma^2 = \underbrace{\frac{\mu_{34}^2}{\hbar^2}\Upsilon^2(\Omega_r,\Delta_r) \langle E_I^2 \rangle}_{\sigma_{\rm etn}^2}
+\underbrace{q R_{\rm T}\Pi(\Omega_r,\Delta_r)}_{{\sigma_{\rm itn}^2}}.
\end{align}
}

\emph{iii) Wireless sensing principle:} 
The signal model in \eqref{eq:signal_model} reveals that the reference signal 
acts as a down-converter with a down-conversion frequency of 
$\omega_{\rm r}$. 
Henceforth, a heterodyne-sensing RARE is also regarded as an atomic mixer~\cite{RydPhase_Simons2019}.
By manipulating the waveform of sensing signal, $e^{j\vartheta_{\rm s}(t)}$, the 
RARE is able to conduct diverse sensing tasks. 
Below are two examples illustrating this capability.

\begin{itemize}
    \item \emph{Example 1} ($\vartheta_{\rm s}(t) = \omega_{\rm s} t$): When the sensing signal has a single tone, the received waveform is $y(t)\propto \cos((\omega_{\rm s} - \omega_{\rm r}) t - 
\omega_{\rm s}\tau + 
  \omega_{\rm r}\tau')$. The target range, $L$, can be recovered from the phase of measured voltage, $- 
\omega_{\rm s}\tau + 
  \omega_{\rm r}\tau' = -\frac{2\omega_{\rm s}L}{c} +   \omega_{\rm r}\tau'$, by compensating for the phase introduced by the known reference signal, $  \omega_{\rm r}\tau'$. 
However, due to phase ambiguity, a distinguishable 
range, $L$, must satisfy
$\omega_{\rm s}\tau = \frac{2L\omega_{\rm s}}{c}  < {2\pi}$.
For  a  
typical WiFi 
signal with
$\omega_{\rm s} = 
2\pi\times 1.5 
{\rm GHz}$, we have $L < 0.1{\rm m}$. This implies that for a \emph{single-tone} sensing signal, the RARE is suitable for estimating target ranges on a small scale, such as micro-vibrations~\cite{QuanSense_Zhang2023}. 
\item \emph{Example 2} ($\vartheta_{\rm s}(t) = \frac{1}{2}\alpha t^2 + \omega_0 t$): 
This paper focuses on sensing target ranges on a larger scale, e.g., $L \in [1{\rm m}, 1000{\rm 
m}]$. By using the LFM waveform, $\vartheta_{\rm s}(t) = \vartheta_0(t) =  \frac{1}{2}\alpha t^2 + \omega_0 t$, as commonly used in classic wireless sensing, the phase ambiguity issue can be effectively addressed~\cite{Radar,malanowski_analysis_2014}. With the LFM waveform, the received waveform is expressed as $y(t)\propto \cos(\frac{1}{2} \alpha(t - \tau)^2 + \omega_0(t - \tau) - \omega_{\rm r}(t - \tau'))$. This enables the recovery of the range $L$ from $y(t)$ using time-frequency analysis techniques.

\end{itemize}


	
	
		

\subsection{Implementation Challenges of Heterodyne Sensing}

Although the heterodyne sensing technique exhibits high sensitivity, it 
 faces two critical challenges in implementation. 
\subsubsection{Bulky transceiver architecture}
Heterodyne sensing requires a bulky transceiver architecture.
As 
presented in Fig.~\ref{img:HeterodyneSensing}(c), 
an external reference source must be deployed, leading to increased hardware cost and energy consumption.  Furthermore, ensuring proper signal decoupling between the transmitter and RARE is essential to mitigate self-interference, which entails additional hardware and algorithm costs.

\subsubsection{Insufficient instantaneous bandwidth}
To satisfy the minimum resolution requirement of range sensing, the sweep 
bandwidth, $B$, of the LFM signal  
in \emph{Example 2} needs to be at least tens of 
MHz~\cite{ksiezyk_opportunities_2023}. However, while the detectable frequency 
range of a RARE is wide due to the abundant energy levels, the detectable 
instantaneous bandwidth of 
RARE 
typically does not exceed 10~MHz~\cite{yang_highly_2024}.
This limitation results from the long response time needed for the quantum 
state $\rho_{12}$ to become steady. The mismatch between the required sweep bandwidth, $B$, and the limited instantaneous bandwidth of RARE hinders the application of existing heterodyne-sensing techniques in high-resolution range 
sensing.



\section{Self-Heterodyne Sensing: Architecture and Properties} \label{sec:3}
\begin{figure}
    \centering
    \includegraphics[width=3.5in]{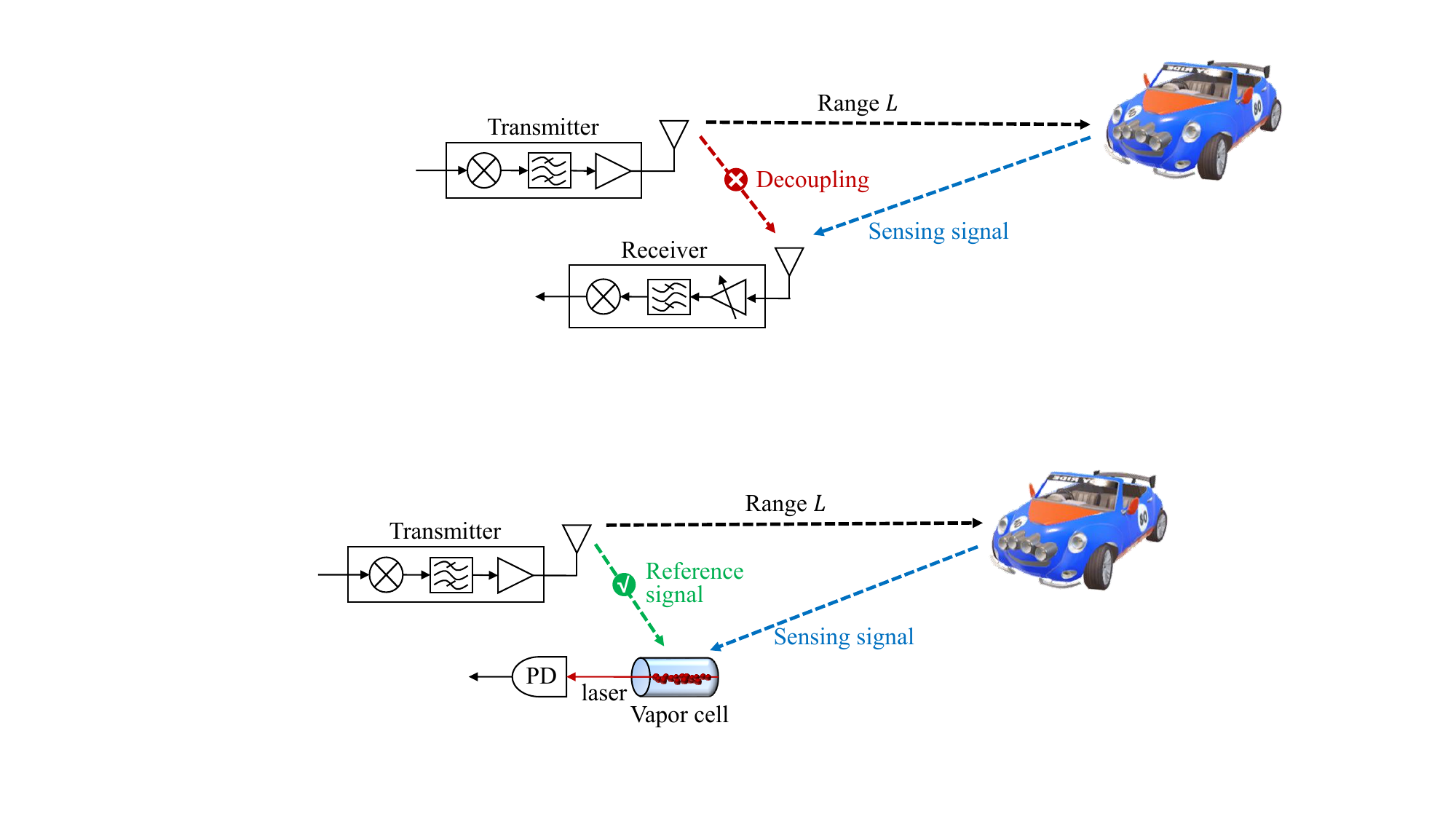}
    \vspace*{-0.5em}
    \caption{The proposed quantum self-heterodyne  sensing system.} 
	\vspace*{-1em}
	\label{img:SelfHeterodyneSensing}
\end{figure}

To address the implementation challenges
of heterodyne sensing, 
we propose the concept of \emph{self-heterodyne sensing} for RAREs.
 Its key idea is to utilize the self-interference generated
by the transmitted sensing signal as the reference
signal. This approach is motivated by two factors: 1) In quasi-monostatic sensing where the transmitter is in close proximity to the RARE, the self-interference signal can be as
strong as the reference signal. 2) The waveform of the transmitted signal is already known to the RARE in sensing applications. With these considerations, the transmitted signal itself is a perfect alternative to the reference, enabling the elimination of the transmitter-receiver decoupling and the external reference source.

In this section, we first introduce the architecture of self-heterodyne sensing.  Subsequently, a comprehensive discussion on the properties of self-heterodyne is conducted to show its superiority over existing heterodyne sensing techniques. 


\subsection{Architecture of Self-Heterodyne Sensing} \label{sec:3.1}
The architecture of self-heterodyne sensing is depicted in 
Fig.~\ref{img:SelfHeterodyneSensing}, which comprises a classical transmitter, 
a RARE, and the target to be sensed. The transmitter broadcasts a signal 
towards the RARE and the target simultaneously. The signal component that 
directly reaches the RARE acts as the reference signal for assisting signal 
detection. The signal component directed towards the target is reflected back 
and serves as the sensing signal. Details regarding the architecture are provided 
below.

\subsubsection{Transmitted signal} 
To enable high-resolution and high-sensitivity range sensing, we use a variant of LFM waveform as the transmitted signal
\begin{align}\label{eq:LFM2}
    s(t) = \sqrt{P_{\rm tx}(t)}e^{j\vartheta_{\rm s}(t)},\:\:0\le t\le T,
\end{align}
where $\vartheta_{\rm s}(t) = \vartheta_0(t) = \frac{1}{2}\alpha t^2 + \omega_0 t$. Different from the LFM waveform adopted in \eqref{eq:LFM} with a \emph{fixed transmission power} $P_{\rm tx}$, we allow the transmission power, $P_{\rm tx}(t)$, to be \emph{time-varying} in \eqref{eq:LFM2}. This design aims at enhancing the sensitivity of RARE, which will be elaborated in Section~\ref{sec:5}. {\color{black} We set the frequency of the time-varying power $P_{\rm tx}(t)$ as the inverse of symbol duration $1/T$ (on the order of $\sim 1\:{\rm kHz}$).
It thus varies slowly relative to the phase function $\vartheta_{\rm s}(t)$, whose instantaneous frequency is on the order of $\sim 3\:{\rm GHz}$.}
Therefore, the instantaneous frequency of $s(t)$ remains $\frac{{\rm d}\vartheta_{\rm s}(t)}{{\rm d}t} = \alpha t + \omega_0$. 

\subsubsection{Received sensing field}  
The sensing signal received at the RARE originates from the transmitted signal reflected by the target.   Its electric field component due to the roundtrip delay $\tau$ is expressed 
\begin{align}
    E_{\rm s}(t) &= |E_{s}(t)|e^{j\vartheta_{\rm s}(t - \tau)} .
\end{align}
Here, $|E_{s}(t)| > 0$ characterizes the time-varying field strength resulting from the time-varying transmission power.  Since Rydberg atoms interact directly with the electric field, accurately modeling $|E_{s}(t)|$ is essential. At time $t$, the incident sensing field delivers a Poynting vector amplitude of 
\begin{align}
    S_{\rm s}(t) = \frac{P_{\rm tx}(t - \tau)G_{\rm tx}A_{\rm c}}{16\pi^2L^4} = P_{\rm tx}(t - \tau)G_{\rm tx} h^2,
\end{align}
where $G_{\rm tx}$ 
represents the transmit antenna gain for the transmitter-to-target link. 
{\color{black}The resulting field strength $|E_{\rm s}(t)|$ is therefore 
\begin{align}
    |E_{\rm s}(t)| &= \sqrt{2Z_0S_{\rm s}(t)} =  \sqrt{2Z_0P_{\rm tx}(t - \tau)G_{\rm tx}h^2} \notag \\ &\overset{(a)}{\approx}  \sqrt{2Z_0P_{\rm tx}(t)G_{\rm tx}h^2},
\end{align}
where $Z_0$ is the vacuum impedance.  Approximation (a) is justified by two temporal scale separations. First, the symbol duration $T$ (on the order of $1\:{\rm ms}$) significantly exceeds the delay $\tau$ (on the order of $1\:{\rm us}$).
In addition, the time-varying amplitude $P_{\rm tx}(t)$ fluctuates at approximately 1~kHz, which is substantially slower than the phase modulation $\vartheta_{\rm s}(t)$ operating at approximately 1~GHz. These disparate time scales permit the safe neglect of the time delay $\tau$ in the power profile. }
This completes the model of the incident sensing field. 

Consequently, the time-varying field strength produces a correspondingly time-varying Rabi frequency for electron transition: 
\begin{align}
    \Omega_{\rm s}(t) = \frac{\mu_{34}}{\hbar} 
	|E_{\rm s}(t)| = \frac{\mu_{34}}{\hbar} 
	\sqrt{2Z_0P_{\rm tx}(t)G_{\rm tx}h^2}.
\end{align}

\subsubsection{Received reference field}
Likewise, the reference field incident on the RARE is a delayed version of the transmitted signal $s(t)$: 
\begin{align}\label{eq:Er2}
    E_{\rm r}(t) &=  |E_{\rm r}(t)|e^{j\vartheta_{\rm s}(t - \tau')},
\end{align}
where $\tau' = \frac{L'}{c}$ refers to the propagation delay from transmitter to receiver, and $L'$ the link distance. 
The reference field strength can be modeled analogously to the sensing field:
\begin{align}
    |E_{\rm r}(t)| &= \sqrt{2Z_0P_{\rm tx}(t - \tau')G_{\rm tx}'h'^2} {\approx}  \sqrt{2Z_0P_{\rm tx}(t)G_{\rm tx}'h'^2},
\end{align}
Unlike the sensing counterpart, the reference field is characterized by a different transmit antenna gain $G'_{\rm tx}$ and channel fading $h'$. The difference in gain arises because the receiver and the sensing target may lie in different directions relative to the transmitter. The channel fading $h'$ is modeled as free-space path loss:  $h' = \frac{1}{\sqrt{4\pi L'^2}}$.

As a result, the Rabi frequency associated with the reference field, $E_{\rm r}(t)$, is given by
\begin{align}
    \Omega_{\rm r}(t) =  \frac{\mu_{34}}{\hbar} 
	\left|E_{r}(t)\right|= \frac{\mu_{34}}{\hbar}\sqrt{2Z_0P_{\rm tx}(t)G_{\rm tx}'h'^2}.
\end{align}

\subsubsection{Overall incident RF field}
The overall incident RF field is the superposition of the sensing and reference field components:
	\begin{align}\label{eq:shs}
		E_{\rm RF}(t)= |E_{\rm r}(t)|e^{j\vartheta_{\rm s}(t - \tau')} + |E_{\rm s}(t)|e^{j\vartheta_{\rm s}(t - \tau)}    .
	\end{align}
Leveraging the expressions of $\Omega_{\rm s}(t)$, $\Omega_{\rm r}(t)$, and $E_{\rm RF}(t)$, we can derive the Rabi frequency associated with the incident RF field as
\begin{align}\label{eq:Rabi}
	\Omega_{\rm RF}(t) &= \frac{\mu_{34}}{\hbar} \left| 
|E_{\rm r}(t)|e^{j\vartheta_{\rm s}(t - \tau')} + |E_{\rm s}(t)|e^{j\vartheta_{\rm s}(t - \tau)}  \right| \notag\\
		&= |\Omega_{\rm r}(t) + 
	\Omega_{\rm s}(t)e^{j\vartheta_{\rm ss}(t; \tau, \tau')}|.
\end{align}
The phase function, 
\begin{align}
	\vartheta_{\rm ss}(t; \tau, \tau') =\vartheta_{\rm s}(t - \tau) - 
	\vartheta_{\rm s}(t - \tau')= - \omega t - \varphi,
\end{align}
characterizes the difference in phase profile between the reference and sensing fields, 
where $\omega \overset{\Delta}{=} (\tau - \tau')\alpha$ and $\varphi 
\overset{\Delta}{=}\frac{1}{2}(\tau 
- \tau')(\omega_0 - \tau - 
\tau')\alpha$. 
Regarding the overall frequency detuning, $\Delta_{\rm RF}$, it is still determined 
by the strong reference signal, $E_{\rm r}(t)$, similar to the conventional heterodyne sensing scheme. 
A key distinction, however, is that the carrier frequency of the reference signal in
\eqref{eq:Er2} is no longer fixed but is instead time-varying. 
This characteristic makes the frequency detuning of self-heterodyne sensing time-varying as well. Specifically,  the instantaneous 
frequency of the reference signal in \eqref{eq:Er2} is given by $ 
\frac{{\rm d}\vartheta_{\rm s}(t - \tau')}{{\rm d}t}$. The 
frequency detuning must therefore be adjusted from \eqref{eq:Delta_RF} to
\begin{align}\label{eq:Delta}
    \Delta_{\rm RF}(t)  &= \Delta_r(t) = \frac{{\rm d} \vartheta_{\rm s}(t - 
    \tau')}{{\rm d} 
    t} - \omega_{34} = \alpha t + \gamma,
\end{align}
where $\gamma \overset{\Delta}{=} - \alpha \tau' + \omega_0 - 
\omega_{34}$. 

\subsubsection{Output voltage and noise}
Substituting \eqref{eq:Rabi} and \eqref{eq:Delta} into $V_{\rm 
out} = V_{\rm in}e^{C_0\Im{\rho_{21}(t)}} = \Pi(\Omega_{\rm RF}, \Delta_{\rm RF})$ yields the measured voltage:  
\begin{align}\label{eq:shsy}
    y(t) &=
    \Pi(|\Omega_{\rm r}(t) + \Omega_{\rm s}(t)e^{j\vartheta_{\rm ss}(t;\tau, \tau')}|, \Delta_r(t)) + 
    n(t) \notag  \\
    &\overset{(a)}{\approx} \underbrace{\Pi(\Omega_{\rm r}(t),  \Delta_r(t))}_{\rm 
    time-varying\:bias} + 
    \underbrace{\frac{\mu_{34}}{\hbar}\Upsilon(\Omega_{\rm r}(t),  
    \Delta_r(t))}_{\rm time-varying\:gain} \notag\\
    &\quad\quad\quad\times |E_s(t)|\cos\left(\vartheta_{\rm ss}(t; \tau, \tau')\right) + 
    n(t),  
\end{align}
where (a) follows from the first-order Taylor expansion 
of $\Pi(\Omega_{\rm RF}(t), \Delta_r(t))$ under the strong reference condition $\Omega_{\rm r}(t) \gg \Omega_{\rm s}(t)$. A comparison of \eqref{eq:shsy} and \eqref{eq:signal_model} reveals that the bias, $\Pi(\Omega_{\mathrm{r}}, \Delta_r)$, and intrinsic gain, $\frac{\mu_{34}}{\hbar}\Upsilon(\Omega_{\mathrm{r}}, \Delta_r)$, are time-invariant in heterodyne sensing due to the fixed frequency detuning of the reference field. In self-heterodyne sensing, however, the time-varying carrier frequency of the reference field causes the corresponding bias and intrinsic gain, denoted by $\Pi(\Omega_{\mathrm{r}}(t), \Delta_r(t))$ and $\frac{\mu_{34}}{\hbar}\Upsilon(\Omega_{\mathrm{r}}(t), \Delta_r(t))$, to become time-varying. This necessitates more advanced signal processing techniques for analyzing the received signal.

{\color{black} For the Gaussian noise $n(t)$, which satisfies $\mathrm{E}(n(t')n(t))=\sigma^2(t)\delta(t 
- t')$, the power spectrum density, $\sigma^2(t)$, is obtained by replacing the bias and intrinsic gain in \eqref{eq:noise_hs} with their self-heterodyne counterparts, $\Pi(\Omega_{\mathrm{r}}(t), \Delta_r(t))$ and $\frac{\mu_{34}}{\hbar}\Upsilon(\Omega_{\mathrm{r}}(t), \Delta_r(t))$. This substitution yields the time-varying power densities of the ETN, ITN, and total noise: 
\begin{align}\label{eq:noise_shs}
    \sigma^2(t)= \underbrace{\frac{\mu_{34}^2}{\hbar^2}\Upsilon^2(\Omega_r(t),\Delta_r(t)) \langle E_I^2 \rangle}_{\sigma_{\rm etn}^2(t)}
+\underbrace{qR_{\rm T}\Pi(\Omega_r(t),\Delta_r(t))}_{{\sigma_{\rm itn}^2(t)}}.
\end{align}
}

\begin{remark} {\rm \textbf{(Wireless sensing principle of self-heterodyne detection)}}
    The signal model
in \eqref{eq:shsy} reveals that the proposed self-heterodyne sensing strategy 
naturally encodes the target range $L = \frac{c\tau}{2}$, 
into the phase difference function $\vartheta_{\rm ss}(t; \tau, \tau') = -\omega t - \varphi$. This conversion allows the detection of target range by estimating the fluctuation frequency of the output voltage, $\omega = (\tau - \tau')\alpha$.
\end{remark}

\begin{figure}
	\centering
	\includegraphics[width=3.5in]{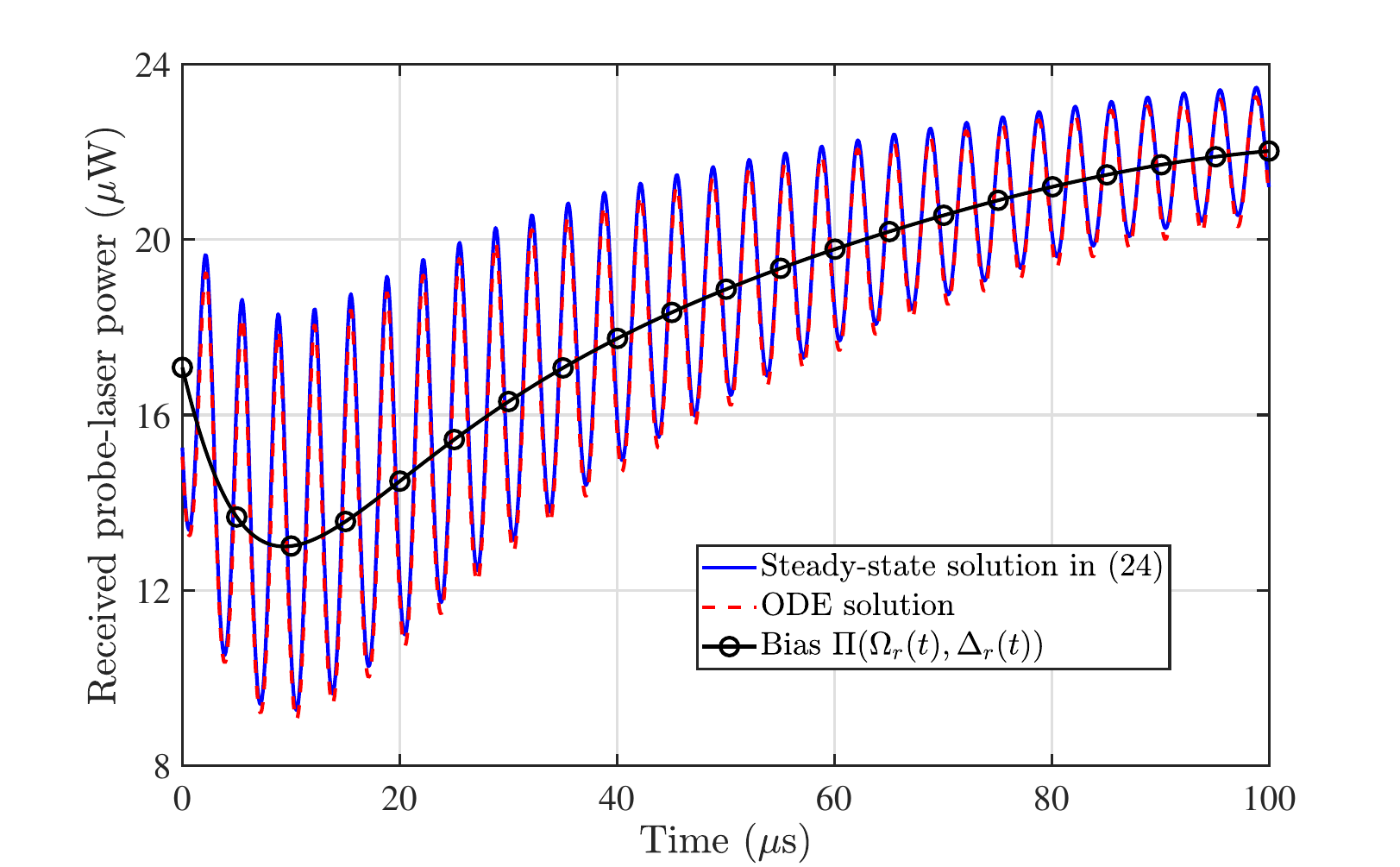}
	\vspace*{-0.5em}
	\caption{The evolution of the measured probe-laser power $y(t)$ in the absence of noise. Some key parameters are listed below: $\Omega_{\rm p} = 2\pi\times6\:{\rm MHz}$, $\Omega_{\rm c} = 2\pi\times 10\:{\rm MHz}$, $\Omega_{\rm r}(t) = 2\pi\times 2 \sqrt{1 + 5t/T}\:{\rm MHz}$, $\Omega_{\rm s}(t) = 2\pi\times 0.1\sqrt{1 + 5t/T} \:{\rm MHz}$, $B = 40\:{\rm MHz}$, $T = 100\:\mu{\rm s}$, and $\tau - \tau' = 0.75\:{\rm \mu s}$.} 
	\vspace*{-1em}
	\label{img:ODEwaveform}
\end{figure}

\subsubsection{Numerical validation}
Figure~\ref{img:ODEwaveform} presents an example of the measured voltage, $y(t)$, in the absence of noise. The curve labeled "ODE solution" is computed using the RydIQule software~\cite{miller_rydiqule_2024}, an open-source Rydberg sensor simulator that employs ordinary differential equation (ODE) solvers to numerically solve the Lindblad master equation \eqref{eq:lindblad}, thereby providing accurate dynamics for $\rho_{21}(t)$ and $y(t)$. The derived steady-state solution for the output voltage from \eqref{eq:shsy} shows close agreement with this ODE solution. Both results exhibit fluctuations around the time-varying bias $\Pi(\Omega_{\mathrm{r}}(t), \Delta_r(t))$ at an identical frequency of $(\tau - \tau')\alpha$. This strong correlation validates the efficacy and accuracy of the derived signal model for self-heterodyne sensing.

\subsection{Properties of Self-heterodyne Sensing}\label{sec:3.3}
We now elaborate on the unique properties and advantages of self-heterodyne sensing. 

\subsubsection{Transceiver architecture} 
The transceiver architecture for self-heterodyne sensing is significantly simpler than that of its heterodyne counterpart. This approach eliminates the need for the deployment of an additional reference source and the signal decoupling between the transmitter and RARE, both of which are essential in heterodyne sensing.

\subsubsection{From atomic mixer to atomic autocorrelator}
The signal model for heterodyne sensing in \eqref{eq:signal_model} shows that the received signal's phase is determined by the sensing waveform down-converted by the reference signal's carrier frequency, i.e., $\vartheta_{\mathrm{sr}}(t; \tau, \tau') = \vartheta_{\rm s}(t - \tau) - \omega_{\mathrm{r}}t + \omega_{\mathrm{r}}\tau'$.
In contrast, for self-heterodyne sensing, the phase function of received signal is the difference between the phase profiles of the transmitted signal at two distinct delays, i.e., $\vartheta_{\rm ss}(t;\tau,\tau') = \vartheta_{\rm s}(t - \tau) - \vartheta_{\rm s}(t - \tau')$. This formulation precisely captures the autocorrelation of the transmitted signal.
This comparison underscores the fundamental physical distinction between the two methods: heterodyne RARE operates as an \emph{atomic mixer}~\cite{RydPhase_Simons2019}, whereas self-heterodyne RARE functions as an \emph{atomic autocorrelator}.

\subsubsection{Bandwidth}
\begin{figure}
	\centering
	\subfigure[Heterodyne-sensing with $\vartheta_{\rm s}(t) = 
	\frac{1}{2}\alpha t^2 + \omega_0 t$.]
	{\includegraphics[width=3.5in]{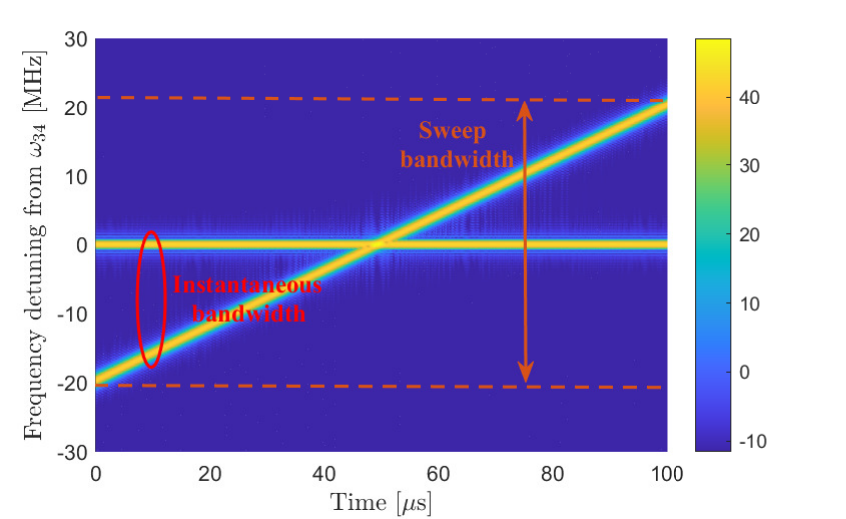}}
	\subfigure[Self-heterodyne-sensing with $\vartheta_{\rm s}(t) = \frac{1}{2}\alpha 
	t^2 + 
	\omega_0t$.]{\includegraphics[width=3.5in]{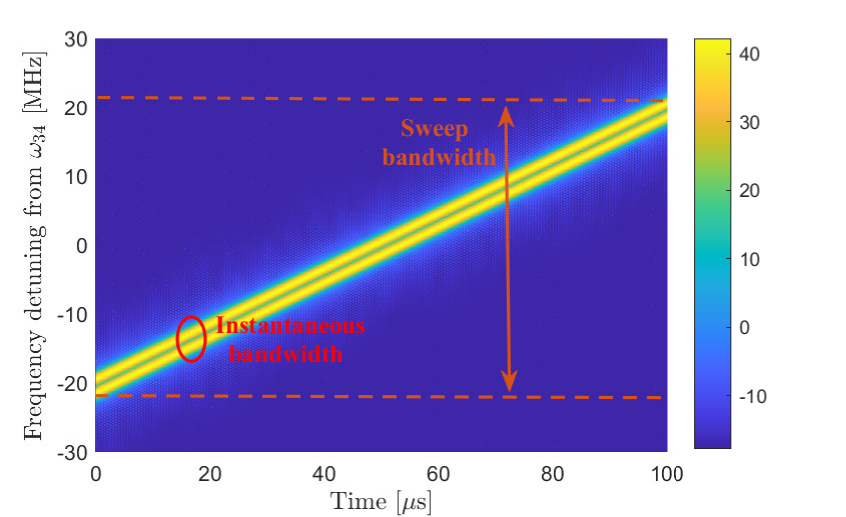}} 
	\caption{STFT of incident RF signals. The parameters are as follows: $T 
		= 100{\mu s}$, $\tau' = 10{\rm n}s$, $\tau = 5{\rm \mu s}$, $B = 
		40 
		{\rm 
			MHz}$, and  
		$\omega_0 = \Omega_{\rm r} = 
		3.5{\rm 
			GHz}$. } %
	\vspace*{-1em}
	\label{img:STFT}
\end{figure}

Self-heterodyne sensing can support a much wider sweep bandwidth, $B$, compared 
to heterodyne sensing. This is because the atomic autocorrelator enables the 
control of the instantaneous bandwidth of the received signal to fit the 
detectable bandwidth range of a RARE, as elaborated below. 

In the context of a classical atomic mixer realized by heterodyne sensing, the
incident RF signal described in \eqref{eq:hs} consists of a single-frequency 
reference signal and a LFM sensing 
signal. Fig.~\ref{img:STFT}(a) visualizes  the 
Short-Time-Fourier-Transform (STFT) 
of this RF signal~\cite{boashash_estimating1_1992}. The frequency gap between the LFM component and the single-frequency component varies linearly over  time.  Particularly, the instantaneous bandwidth of the incident RF signal can expand from 0 to half of the sweep bandwidth, $B/2$, as time going by.  
In a typical wireless sensing application, $B$ can range from tens of MHz to tens of GHz. Unfortunately, it is challenging for a RARE to detect such a broad instantaneous bandwidth.

In the proposed atomic autocorrelator implemented by self-heterodyne sensing, the incident RF signal in  \eqref{eq:shs} is 
the superposition of two identical waveforms with different delays, $\tau$ and 
$\tau'$. This configuration ensures that the frequency gap between the two signal 
components remains unchanged over time, resulting in a constant and narrow instantaneous bandwidth. 
The STFT 
of RF signal is plotted in 
Fig.~\ref{img:STFT}(b). 
Despite having a broad sweep bandwidth, the 
 RF signal only occupies a narrow frequency range instantaneously due to the identical reference and sensing waveforms. 
 Mathematically, the instantaneous bandwidth of the received signal in \eqref{eq:shsy} is given as $\frac{\omega}{2\pi} = \frac{(\tau - \tau')\alpha}{2\pi} = \frac{\tau - \tau'}{T}B$. As the symbol duration $T$ is a controllable parameter, we can properly set its value to make $\frac{\tau - \tau'}{T}B$  fall within the detectable bandwidth range of a RARE. 
 For instance, consider typical values in wireless sensing, $\tau = 1\:{\rm \mu s}$, $\tau' = 1\:{\rm ns}$, and $T = 100\:{\rm \mu s}$. The instantaneous bandwidth occupies just $1\%$ of the sweep bandwidth, $B$. 
Thereafter, the proposed self-heterodyne sensing is more suitable for RARE in high-resolution wireless sensing.   

\section{Self-Heterodyne Sensing: Algorithmic Design}\label{sec:4}
In this section, we first formulate the range detection problem in  
self-heterodyne sensing as an NLS optimization. Then, a 
two-stage algorithm is designed to 
solve the formulated problem. This algorithm involves a coarse estimation stage based on spectrum analysis followed by a refinement stage using Newton's method. 
Lastly, the CRLB for self-heterodyne sensing will be analyzed.

\subsection{NLS Problem Formulation}

To ease the discussion, we normalize the received signal as 
\begin{align}\label{eq:ybar}
	\bar{y}(t) &= \frac{y(t) - \Pi(\Omega_{\rm r}(t),  \Delta_r(t))}{\sigma(t)} \notag\\&= 
	\varrho(t) h\cos(\omega t + \varphi) + 
	\bar{n}(t),
\end{align}
where the equivalent Gaussian noise $\bar{n}(t)$ has unit power, {and} the time-varying amplitude $\varrho(t)$ is defined as
\begin{align}\label{eq:varrho}
	\varrho(t) = \frac{ \mu_{34}\Upsilon(\Omega_{\rm r}(t),  
    \Delta_r(t)) \sqrt{2Z_0P_{\rm tx}(t)G_{\rm tx}}}{\hbar \sigma(t)}
\end{align}
As the relative position from the transmitter to the RARE is fixed, the parameters of the reference signal, $\tau'$, $\Omega_{\rm r}(t)$, $\Delta_r(t)$, and the time-varying amplitude, 
$\varrho(t)$, can be acquired in 
advance. Conversely, the channel parameters pertaining to the target, i.e., $h$, 
$\tau$, and $\varphi$, are unknown to RARE and should be estimated.
As the target range, $L$, is embedded in the frequency of the received signal,  
$\omega = (\tau - \tau')\alpha = (\frac{2L}{c} - \tau')\alpha$, the range 
sensing 
problem is equivalently a frequency estimation problem with a time-varying 
amplitude, $\varrho(t)$. We adopt the NLS principle to formulate 
it as
\begin{align}\label{eq:P0}
	\min_{\hat{h}, \hat{\omega}, \hat{\varphi}} \int_{0}^T |\bar{y}(t) - 
	\varrho(t) \hat{h}\cos(\hat{\omega} t + \hat{\varphi})|^2{\rm d}t,
\end{align} 
Solving problem \eqref{eq:P0}, we can estimate the propagation delay and target range as 
$\hat{\tau} = \frac{\hat{\omega}}{{\alpha}} + \tau'$ and 
$\hat{L} = \frac{1}{2}c\hat{\tau}$, respectively. 

\subsection{NLS Algorithm}
We first seek the optimal solution for $\hat{h}$ in the formulated NLS problem. 
By setting the derivative of 
\eqref{eq:P0} w.r.t $\hat{h}$ to zero, the optimal $\hat{h}$ can be 
obtained 
straightforwardly as
\begin{align}\label{eq:h_opt}
\hat{h} = \frac{\int_{0}^T \bar{y}(t)  
	\varrho(t)\cos(\hat{\omega} t + \hat{\varphi}){\rm 
		d}t}{\int_{0}^T |  
	\varrho(t)\cos(\hat{\omega} t + \hat{\varphi})|^2{\rm 
		d}t}.
\end{align}
By further substituting \eqref{eq:h_opt} into \eqref{eq:P0}, the original NLS problem is transformed to
\begin{align}\label{eq:P1}
	\max_{\hat{\omega}, \hat{\varphi}}\: Q(\hat{\omega}, \hat{\varphi}) 
	\overset{\Delta}{=} 
	\frac{\left|\int_{0}^T \bar{y}(t)  
		\varrho(t)\cos(\hat{\omega} t + \hat{\varphi}){\rm 
			d}t\right|^2}{\int_{0}^T |  
			\varrho(t)\cos(\hat{\omega} t + \hat{\varphi})|^2{\rm 
				d}t},
\end{align}
which is clearly a non-convex programming. Finding its global optimal 
solution requires a high-complexity two-dimensional search over the frequency 
and phase plane.
%
We devise a two-stage algorithm to address \eqref{eq:P1} with low complexity, which involves a coarse estimation stage followed by a refinement
stage. The first stage aims at finding good initial estimates,  
$\hat{\omega}^{(0)}$ and $\hat{\phi}^{(0)}$, efficiently, while the refinement stage fine-tunes $\hat{\omega}$ and $\hat{\phi}$  to approach the 
global optimal solution. The step-by-step procedures are presented below.

\subsubsection{Coarse estimation stage}
{\color{black}The first stage relies on the assumption that cosine term $\cos(\hat{\omega}t + \hat{\phi})$ in $Q(\hat{\omega}, \hat{\varphi})$ oscillates much faster than the amplitude term $ 
\varrho(t)$.  This timescale separation is justified by comparing their characteristic frequencies. The amplitude $ 
\varrho(t)$ fluctuates at a rate on the order of the inverse of symbol duration $\frac{1}{T}$.  The cosine term oscillates at a frequency of $\frac{\omega}{2\pi} = \frac{\tau B}{T}$.
Using typical system parameters: ($T\sim 1\:{\rm ms}$, $\tau \sim 1\:{\mu s}$, and $B\sim{100\:{\rm MHz}}$), we find:
    $\frac{1}{T}\sim 1\:{\rm kHz} \ll \frac{\tau B}{T} \sim 100\:{\rm kHz}$.
The significant disparity (a factor of 100 in this example) validates the timescale separation.}
Leveraging this frequency separation, we approximate the denominator in 
\eqref{eq:P1} as 
\begin{align}\label{eq:approx}
 &\int_{0}^T |  
	\varrho(t)\cos(\hat{\omega} t + \hat{\varphi})|^2{\rm 
		d}t= \frac{1}{2} \int_{0}^{T} | 
\varrho(t)|^2{\rm 
d}t \notag\\&- \frac{1}{2} \int_{0}^{T} | 
\varrho(t)|^2\cos(2\hat{\omega}t + 2\hat{\phi}){\rm 
d}t {\approx}   \frac{1}{2} \int_{0}^{T} | 
\varrho(t)|^2{\rm 
d}t.
\end{align}
The approximation follows because the second, rapidly oscillating integral averages to zero over the symbol duration $T$.
Using \eqref{eq:approx}, the 
problem \eqref{eq:P1} 
can be simplified as 
\begin{align}\label{eq:P2}
	\max_{\hat{\omega}^{(0)}, \hat{\varphi}^{(0)}} 
	\left|{\int_{0}^T \bar{y}(t)  
		\varrho(t)\cos(\hat{\omega}^{(0)} t + \hat{\varphi}^{(0)}){\rm 
			d}t}\right|^2.
\end{align}
The initial estimated phase, $\hat{\phi}^{(0)}$, can thus be attained by setting 
the
derivative of 
\eqref{eq:P2} w.r.t $\hat{\varphi}^{(0)}$ to zero:
\begin{align}\label{eq:phi}
	\tan\left(\hat{\varphi}^{(0)}\right) = - \frac{{\int_{0}^T \bar{y}(t)  
			\varrho(t)\sin(\hat{\omega}^{(0)} t){\rm 
				d}t}}{{\int_{0}^T \bar{y}(t)  
			\varrho(t)\cos(\hat{\omega}^{(0)} t){\rm 
			d}t}}. 
\end{align}
Then, by substituting \eqref{eq:phi} back to \eqref{eq:P2} and conducting some tedious calculations, we can rewrite the problem \eqref{eq:P2} as 
\begin{align}\label{eq:P3}
	&\max_{\hat{\omega}^{(0)}} 
	\left|{\int_{0}^T \bar{y}(t)  
		\varrho(t)e^{j\hat{\omega}^{(0)} t}{\rm 
			d}t}\right|^2.
\end{align}
One can observe from \eqref{eq:P1} and \eqref{eq:P3} that the original 
two-dimensional optimization problem is transferred to a simpler 
one-dimensional one. 
In particular, problem \eqref{eq:P3} is aimed at searching the principal 
frequency component, $\hat{\omega}^{(0)}$, inside the signal 
$\bar{y}(t)\varrho(t)$. Therefore, it can be efficiently solved using spectrum analysis tools, such as the Fast Fourier Transform (FFT) technique.

\subsubsection{Refinement stage}
The assumption we made in \eqref{eq:approx} might lead to inaccurate estimate of the target range. To address this issue, we introduce the refinement stage. It iteratively fine-tunes the parameters, $\hat{\omega}^{(0)}$ and $\hat{\varphi}^{(0)}$, obtained in the first stage to maximize the original objective function, $Q(\hat{\omega}, \hat{\varphi})$, in \eqref{eq:P1}. 
We use the superscript $(i)$ to label the parameters, $\hat{\omega}^{(i)}$ and 
$\hat{\varphi}^{(i)}$, obtained in the $i$-th iteration. The refinement is conducted using the 
Newton's method, expressed as 
\begin{align}\label{eq:Newton}
	\left(
	\begin{array}{c}
	\hat{\omega}^{(i)} \\
	\hat{\varphi}^{(i)}
	\end{array}
	\right) = 
		\left(
	\begin{array}{c}
		\hat{\omega}^{(i - 1)} \\
		\hat{\varphi}^{(i - 1)}
	\end{array}
	\right) + 			\left(
	\begin{array}{cc}
	 \frac{\partial^2 Q}{\partial \hat{\omega}^2} & 
	 \frac{\partial^2 Q}{\partial \hat{\omega} \partial \hat{\varphi}}\\
	 \frac{\partial^2 Q}{\partial \hat{\varphi} \partial \hat{\omega} } & 
	\frac{\partial^2 Q}{\partial \hat{\varphi}^2}
	\end{array}
	\right)^{-1}	\left(
	\begin{array}{c}
		\frac{\partial Q}{\partial \hat{\omega}} \\
		\frac{\partial Q}{\partial \hat{\varphi}}
	\end{array}
	\right). 
\end{align}
We accept a refinement only if the value of the new objective function 
$Q(\hat{\omega}^{(i)}, 
\hat{\varphi}^{(i)})$ is larger than the old one by a threshold. 

After the Newton refinement converges, the propagation delay and target range are determined as \begin{align}
    \hat{\tau} 
= 
\frac{\hat{\omega}^{(N_{\rm iter})}}{{\alpha}} + \tau'\:{\rm and}\:\hat{L} 
= \frac{1}{2}c\hat{\tau},
\end{align}
respectively, where $N_{\rm iter}$ denotes the total number of iterations.

\subsection{CRLB Analysis}

We now analyze the CRLB of self-heterodyne sensing, as it characterizes the 
lower bound of estimation error. We use the vectors $\boldsymbol{\kappa} = 
[\kappa_1, \kappa_2, \kappa_3]^T \overset{\Delta}{=}[h, 
\omega, 
\varphi]^T$ and $\hat{\boldsymbol{\kappa}} = 
[\hat{\kappa}_1, \hat{\kappa}_2, \hat{\kappa}_3]^T \overset{\Delta}{=}[\hat{h}, 
\hat{\omega}, 
\hat{\varphi}]^T$ to 
represent the true values of unknown parameters and their estimated results. 
In 
particular, $\hat{\tau} =  
\frac{{\hat{\kappa}_2}}{\alpha} + \tau'$ and ${\tau} = 
\frac{{\kappa_2}}{\alpha} + \tau'$. The covariance matrix of 
$\boldsymbol{\kappa}$ is lower bounded by the inverse of Fisher information 
matrix (FIM)
\begin{align}
	{\rm Cov}(\hat{\boldsymbol{\kappa}}) \ge {\rm 
	CRLB}(\boldsymbol{\kappa}) = \mb{I}^{-1}(\boldsymbol{\kappa}),
\end{align}
where $[\mb{I}(\boldsymbol{\kappa})]_{ij} = -E\left(\frac{\partial^2 \log p(y(t)|\boldsymbol{\kappa})}{\partial \kappa_i \partial \kappa_j}\right)$. Here, the loglikehood function is given as 
\begin{align}
	\log p(\bar{y}(t)|\boldsymbol{\kappa}) \propto -\frac{1}{2}\int_{0}^T|\bar{y}(t) - \varrho(t)h\cos(\omega t 
	+ \varphi)|^2{\rm d} t.
\end{align}
The full expression of the FIM  
 is derived in Appendix A for reader's reference. 
Moreover, we are interested in the estimation error of the propagation delay, $\tau$. 
The 
following proposition provides an asymptotic expression of ${\rm  CRLB}(\tau)$ for large $\omega$. 
\begin{proposition}\label{proposition1}
	For large $\omega$, the CRLB of $\tau$ is asymptotically
	\begin{align}\label{eq:CRLB_R}
		{\rm Var}(\hat{\tau}) \ge {\rm CRLB}(\tau) \rightarrow
		\frac{2\bar{\varrho}_0}{\alpha^2h^2(\bar{\varrho}_0\bar{\varrho}_2 - 
				\bar{\varrho}_1^2)},
	\end{align}	
where $\bar{\varrho}_0 \overset{\Delta}{=} 
\int_0^T\varrho^2(t) {\rm d}t$, $\bar{\varrho}_1 \overset{\Delta}{=} 
\int_0^T\varrho^2(t) t {\rm d}t$, and $\bar{\varrho}_2 \overset{\Delta}{=} 
\int_0^T\varrho^2(t) t^2 {\rm d}t$.
\end{proposition}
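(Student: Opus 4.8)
The plan is to start from the standard Fisher-information identity for a deterministic waveform observed in unit-variance additive white Gaussian noise. Writing the noiseless part of \eqref{eq:ybar} as $s(t;\boldsymbol{\kappa}) = \varrho(t)\,h\cos(\omega t + \varphi)$, the log-likelihood stated just above the proposition gives $[\mb{I}(\boldsymbol{\kappa})]_{ij} = \int_0^T \frac{\partial s}{\partial \kappa_i}\frac{\partial s}{\partial \kappa_j}\,{\rm d}t$. First I would record the three partial derivatives $\partial_h s = \varrho(t)\cos(\omega t + \varphi)$, $\partial_\omega s = -h\,\varrho(t)\,t\sin(\omega t + \varphi)$, and $\partial_\varphi s = -h\,\varrho(t)\sin(\omega t + \varphi)$, and assemble the six distinct entries of the $3\times3$ FIM (this is precisely the full expression relegated to Appendix~A). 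Each entry is an integral of $\varrho^2(t)$ times a monomial $t^k$ ($k\in\{0,1,2\}$) times a product of two sinusoids at carrier $\omega$; applying the product-to-sum identities splits every such integral into a slowly-varying piece and a piece modulated by $\cos(2\omega t + 2\varphi)$ or $\sin(2\omega t + 2\varphi)$.

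The second step is the asymptotic reduction. Since $\varrho(t)$ (hence $\varrho^2(t)\,t^k$) varies on the slow timescale $\tfrac1T$ while $\omega \sim \tau B/T$ is large, the Riemann--Lebesgue lemma (equivalently, a non-stationary-phase estimate) forces every integral carrying a $\cos(2\omega t + 2\varphi)$ or $\sin(2\omega t + 2\varphi)$ factor to tend to zero as $\omega\to\infty$; this is the same timescale-separation argument already invoked to justify \eqref{eq:approx}. Consequently the amplitude--phase cross terms vanish: $[\mb{I}]_{h\omega} = -\tfrac{h}{2}\int_0^T \varrho^2 t\sin(2\omega t+2\varphi)\,{\rm d}t \to 0$ and $[\mb{I}]_{h\varphi}\to 0$, while the surviving entries converge to $[\mb{I}]_{hh}\to\tfrac12\bar{\varrho}_0$, $[\mb{I}]_{\omega\omega}\to\tfrac{h^2}{2}\bar{\varrho}_2$, $[\mb{I}]_{\omega\varphi}\to\tfrac{h^2}{2}\bar{\varrho}_1$, $[\mb{I}]_{\varphi\varphi}\to\tfrac{h^2}{2}\bar{\varrho}_0$. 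Hence the FIM becomes block-diagonal, decoupling $h$ from the frequency--phase pair $(\omega,\varphi)$.

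The last step is pure linear algebra: invert the surviving $2\times2$ block, which has diagonal entries $\tfrac{h^2}{2}\bar{\varrho}_2$ and $\tfrac{h^2}{2}\bar{\varrho}_0$ and off-diagonal entry $\tfrac{h^2}{2}\bar{\varrho}_1$; its $(\omega,\omega)$ inverse entry is ${\rm CRLB}(\omega) = \tfrac{2\bar{\varrho}_0}{h^2(\bar{\varrho}_0\bar{\varrho}_2 - \bar{\varrho}_1^2)}$, with the denominator positive by the Cauchy--Schwarz inequality $\bar{\varrho}_1^2 \le \bar{\varrho}_0\bar{\varrho}_2$. Because $\tau = \omega/\alpha + \tau'$ is an affine reparametrization, ${\rm CRLB}(\tau) = {\rm CRLB}(\omega)/\alpha^2$, which is exactly \eqref{eq:CRLB_R}. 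I expect the main obstacle to be making the vanishing of the double-frequency integrals rigorous: one needs a mild regularity hypothesis on $\varrho(t)$ (e.g.\ $C^1$ or bounded variation), a controlled $O(1/\omega)$ bound on the neglected terms, and the observation that a vanishing perturbation of an invertible block-diagonal FIM perturbs the relevant entry of its inverse only to leading order; the remaining manipulations are routine bookkeeping of trigonometric integrals.
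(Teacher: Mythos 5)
Your proposal is correct and follows essentially the same route as the paper's Appendix~A: compute the six FIM entries for the model $\varrho(t)h\cos(\omega t+\varphi)$, discard the double-frequency ($2\omega$) integrals by the timescale-separation argument as $\omega\to\infty$, invert the resulting block-diagonal FIM to obtain ${\rm CRLB}(\omega)=\frac{2\bar{\varrho}_0}{h^2(\bar{\varrho}_0\bar{\varrho}_2-\bar{\varrho}_1^2)}$, and scale by $1/\alpha^2$ via the affine map $\tau=\omega/\alpha+\tau'$. Your added remarks on Cauchy--Schwarz positivity of the denominator and the regularity needed to make the Riemann--Lebesgue step rigorous go slightly beyond what the paper records, but do not change the argument.
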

\begin{proof}
	(See Appendix A). 
\end{proof}
In Section~\ref{sec:6}, numerical results will show that the propagation delay estimated by the proposed algorithm could approach CRLB.

{\color{black}\subsection{Extension to Multi-target Scenario}
The proposed self-heterodyne sensing framework can be naturally generalized to multi-target scenarios. This subsection outlines this extension, demonstrating the method's inherent scalability.
Consider the coexistence of $M$ targets.  We introduce the subscript $m$ to denote parameters for the $m$-th target, including its sensing field, $E_{{\rm s}, m}(t)$, distance $L_m$, propagation delay $\tau_n = \frac{2L_m}{c}$, channel fading $h_m$, and phase shift $\varphi_m$. Note that the channel parameters for the reference field remain unchanged as we make no difference to the reference source. Consequently, the aggregate incident RF field comprises one reference field and $M$ sensing fields:
\begin{align}
    E_{\rm RF}(t) = |E_{\rm r}(t)|e^{j\vartheta(t - \tau')} + \sum_{m = 1}^M|E_{{\rm s},m}(t)|e^{j\vartheta(t - \tau_m)}. 
\end{align}
Following the derivations in Subsection~\ref{sec:3.1}, the resulting output voltage is given by:  
\begin{align}\label{eq:ybar2}
    y(t) &= \Pi(\Omega_{\rm r}(t),  \Delta_{\rm r}(t)) + 
    \frac{\mu_{34}}{\hbar}\Upsilon(\Omega_{\rm r}(t),  
    \Delta_r(t)) \notag\\
    &\quad\quad\quad\times \sum_{m = 1}^M|E_{{\rm s}, m}(t)|\cos\left(\vartheta_{{\rm s}}(t; \tau_m, \tau')\right) + 
    n(t).
\end{align}
The corresponding normalized output voltage then becomes a multi-tone signal:
\begin{align}
    \bar{y}(t) = \rho(t)\sum_{m = 1}^M h_{m}\cos(\omega_mt + \varphi_m) + \bar{n}(t). 
\end{align}
where each frequency component $\omega_m = (\tau_m - \tau')\alpha = \left(\frac{2L_m}{c} - \tau'\right)\alpha$ uniquely encodes the range $L_m$ of the $m$-th target. 
This generalized signal model preserves the structure of the single-target case, with the time-varying amplitude $\rho(t)$ and noise $\bar{n}(t)$ remaining identical to those in the original model.
As a result, the sensing problem naturally transitions from single-frequency estimation to a multi-frequency estimation problem, formulated as: 
\begin{align}\label{eq:P0}
	\min_{\{\hat{h}_m\}, \{\hat{\omega}_m\}, \{\hat{\varphi}_m\}} \int_{0}^T |\bar{y}(t) - 
	\varrho(t) \sum_{m}\hat{h}_m\cos(\hat{\omega}_m t + \hat{\varphi}_m)|^2{\rm d}t.
\end{align} 
This multi-target problem can be efficiently solved by generalizing the proposed NLS algorithm in two stages:
\begin{itemize}
    \item {\bf Coarse estimation}: The top-1 frequency estimation in \eqref{eq:P3} is extended to identifying the top-$M$ frequency peaks of 
    $
    \left|{\int_{0}^T \bar{y}(t)  
		\varrho(t)e^{j\hat{\omega}_m t}{\rm 
			d}t}\right|^2$.
This can be efficiently implemented using algorithms such as FFT or Orthogonal Matching Pursuit (OMP).
\item {\bf Refinement}: The refinement stage is straightforwardly extended by augmenting the variable set in the Newton's solver from the single pair $(\omega, \varphi)$ to the sets $\{\omega_m\}$ and $\{\varphi_m\}$. 
\end{itemize}

In summary, this extension underscores the scalability of the proposed design, demonstrating its capability to handle multi-target sensing scenarios through a natural generalization of both the signal model and the estimation algorithm.
}

\section{Self-Heterodyne Sensing: Sensitivity Maximization}\label{sec:5}
In this section, we propose to maximize the RARE's sensitivity through the manipulation of the time-varying transmission power, $P_{\rm tx}(t)$, which is referred to as \emph{power-trajectory ($P$-trajectory)} design. 
We first justify the necessity of $P$-trajectory design for sensitivity maximization. 
{Subsequently, an ITN-limited $P$-trajectory is developed by deriving the asymptotically optimal transmission power in the presence of ITN only. Finally, this initial design is adapted into a practical $P$-trajectory by incorporating the influence of both ETN and ITN. }

\subsection{Necessity of $P$-trajectory Design}
From the signal model in \eqref{eq:ybar}, the sensing accuracy of RARE is primarily determined by the time-varying amplitude $\varrho(t)$. A larger $\varrho^2(t)$ increases the SNR of RARE, defined as ${\rm SNR}(t) = \varrho^2(t)h^2T$, thereby enhancing sensing sensitivity. This observation motivates us to carefully manipulate the trajectory of the time-varying transmission power, $P_{\rm tx}(t)$, to maximize $\varrho^2(t)$ and achieve the highest possible sensitivity of a RARE. 

Considering noise contributions from both ETN and ITN, $\sigma_{\rm etn}^2(t)$ and $\sigma_{\rm itn}^2(t)$, the SNR can be explicitly expressed as
\begin{align}\label{eq:SNR}
	{\rm SNR}(t) 
= \varrho^2(t)h^2T = \frac{{\rm SNR}_{\rm etn}(t) {\rm SNR}_{\rm itn}(t)}{{\rm SNR}_{\rm etn}(t)+{\rm SNR}_{\rm itn}(t)}.
\end{align}
Here, ${\rm SNR}_{\rm etn}(t)$ represents the ETN-limited SNR: 
\begin{align}\label{eq:SNR_ext}
    {\rm SNR}_{\rm etn}(t) &= \frac{\frac{\mu_{34}^2}{\hbar^2}\Upsilon^2(\Omega_{\rm r}(t),  
		\Delta_r(t))|E_s(t)|^2T}{\sigma_{\rm etn}^2(t)}  \notag \\
        &= \frac{T}{\langle E_I^2\rangle}\underbrace{|E_{\rm s}(t)|^2}_{\rm Field\:strength}, 
\end{align}
while ${\rm SNR}_{\rm itn}(t)$ refers to the ITN-limited SNR: 
\begin{align}\label{eq:SNR_int}
        {\rm SNR}_{\rm itn}(t) &= \frac{\frac{\mu_{34}^2}{\hbar^2}\Upsilon^2(\Omega_{\rm r}(t),  
		\Delta_r(t))|E_s(t)|^2T}{\sigma_{\rm itn}^2(t)}  \notag \\&= \underbrace{\frac{\mu_{34}^2\Upsilon^2(\Omega_{\rm r}(t),  
		\Delta_r(t))T}{q\hbar^2\Pi(\Omega_{\rm r}(t),  
			\Delta_r(t))}}_{\rm Gain\:of\:RARE} \underbrace{|E_s(t)|^2}_{\rm Field\:strength} . 
\end{align}
These two kinds of SNR exhibit distinct dependence on the transmission power, which are compared below.

\begin{remark} {\rm \textbf{(Impact of transmission power on  ETN-limited SNR)}}
In the ETN-limited regime, ${\rm SNR}_{\rm etn}$ scales linearly with the incident field strength, $|E_{\rm 
s}(t)|^2$ (and thus with transmission power $P_{\rm tx}(t)$). This is because the RARE amplifies both the sensing signal and external noise. This linear relationship is consistent with the SNR behavior of a classical receiver. 
\end{remark}
\begin{remark} {\rm \textbf{(Impact of transmission power on  ITN-limited SNR)}}
In the ITN-limited regime, the transmission power, $P_{\rm 
tx}(t)$, and instantaneous frequency, $\frac{{\rm d}\vartheta_{\rm s}(t)}{{\rm d}t}$, of the transmitted signal affect ${\rm SNR}_{\rm itn}$ through two mechanisms:
\begin{itemize}
    \item First, similar to the ETN-limited case, higher transmit power directly increases the strength of the sensing field $|E_s(t)|^2$. 
    \item {\color{black} Second, $P_{\rm tx}(t)$ and $\frac{{\rm d}\vartheta_{\rm s}(t)}{{\rm d}t}$ influence the Rabi frequency, $\Omega_{\rm r}(t) =  \frac{\mu_{34}}{\hbar}|E_{\rm r}(t)|$, and frequency detuning, $\Delta_{\rm r}(t) = \frac{{\rm d}\vartheta_{\rm s}(t - \tau')}{{\rm d}t} - \omega_{34}$, of the reference field, which collectively determine the gain of RARE. Notably, this gain does not increase monotonically with $\Omega_{\rm r}(t)$ or $\Delta_{\rm r}(t)$. Taking the Rabi frequency as an example, it quantifies the intensity of electron transition. 
    When $\Omega_{\rm r}(t)$ is either too low or too high, the electron population becomes predominantly localized in either the initial or final Rydberg states ($\ket{3}$ or $\ket{4}$). In both cases, the system responds poorly to the weak sensing signal. Maximal gain is achieved only at intermediate Rabi frequencies (corresponding to intermediate transmission power), where a balanced electron population exists between Rydberg states. This mechanism causes ${\rm SNR}_{\rm itn}(t)$ to be a non-monotonic function of transmission power. }
\end{itemize}
\end{remark}

\begin{figure}
	\centering
	\subfigure[Received SNR (dB) of a classical receiver.]
	{\includegraphics[width=3.5in]{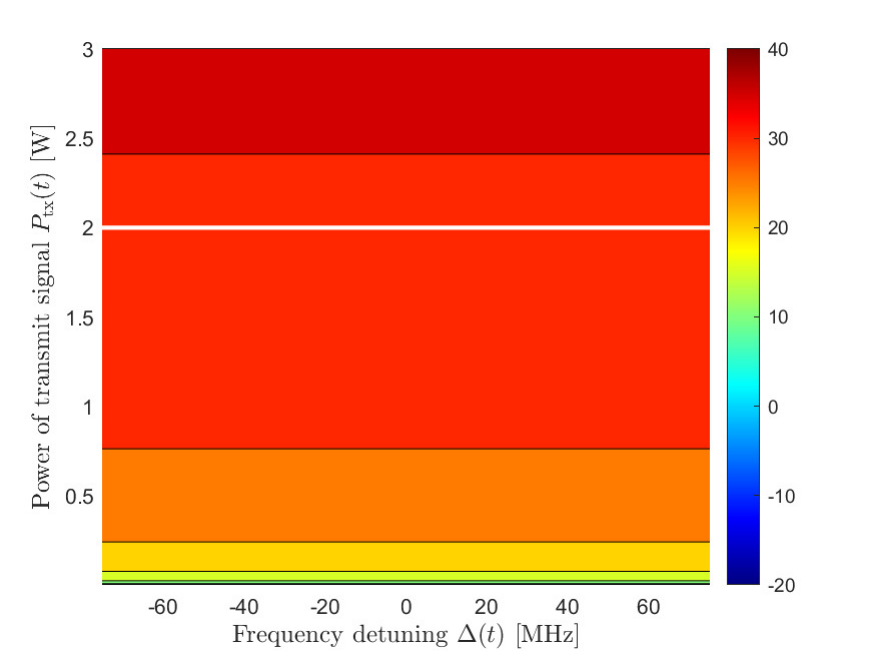}}
	\subfigure[Received SNR (dB) of a 
	RARE.]{\includegraphics[width=3.5in]{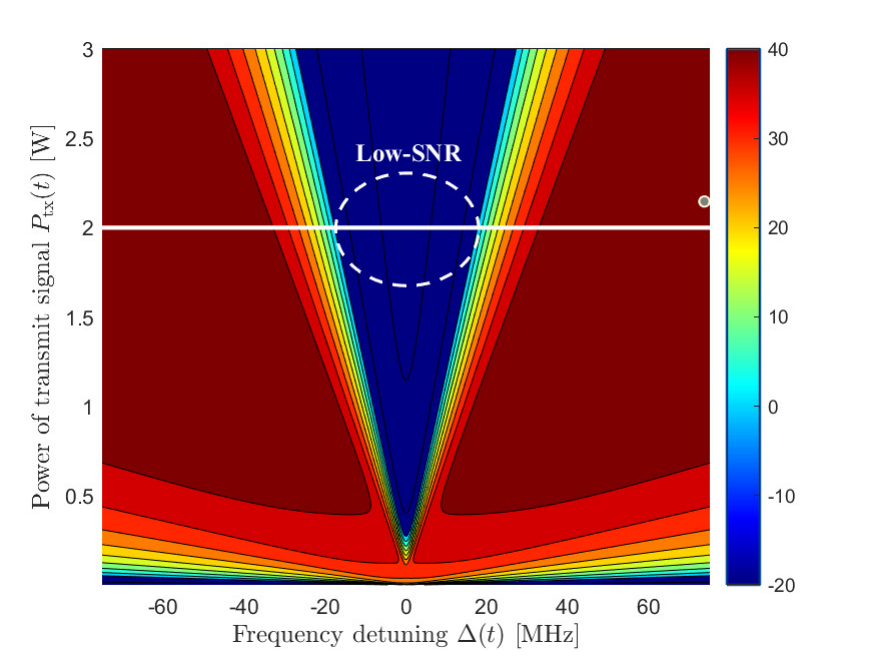}} 
	\caption{The effect of transmission power $P_{\rm tx}(t)$ and frequency 
		detuning $\Delta(t)$ on the received SNR for (a) a classical receiver 
		and 
		(b) a RARE. The white line refers to a fixed transmit power of $P_{\rm tx}(t)\equiv2\:{\rm W}$
	} %
	\vspace*{-1em}
	\label{img:SNR}
\end{figure}

As a result, due to the combined effects of ETN and ITN, the received SNR in self-heterodyne sensing exhibits a complicated pattern with the transmission power, $P_{\rm tx}(t)$, and frequency detuning, $\Delta(t)$.
Fig.~\ref{img:SNR} illustrates this pattern. In the case of a classical receiver, as the frequency detuning of LFM wavefrom sweeps from $-B/2$ to $B/2$, the received SNR remains constant when the transmission power is fixed. Henceforth, a constant transmission power is optimal for classical reception. 
In contrast, a fixed transmission power may cause RARE to experience regions of extremely low SNR during frequency detuning sweeps. For example, with $P_{\rm tx}(t) \equiv 2\:{\rm W}$, RARE incurs over 
50 dB SNR loss as frequency detuning varies from -60 MHz to 0 MHz. 
That is to say, achieving maximal sensitivity in self-heterodyne sensing requires careful optimization of a time-varying $P_{\rm tx}(t)$ trajectory.

		



\subsection{ITN-limited $P$-trajectory Design}

\begin{figure}
	\centering
	\subfigure[RARE's SNR (dB) with ITN only.]
	{\includegraphics[width=3.5in]{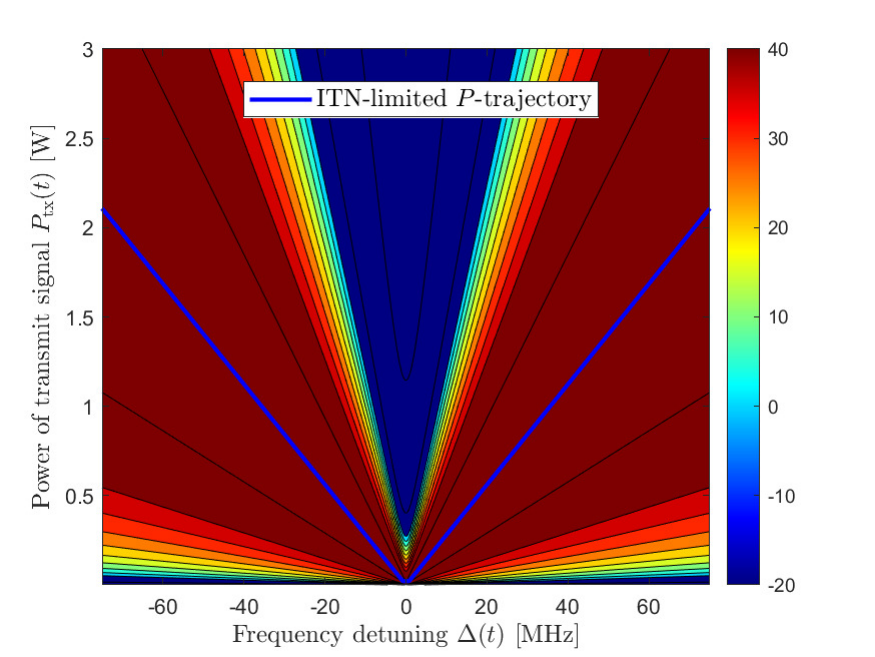}}
	\subfigure[RARE's SNR (dB) with both ETN and ITN.]{\includegraphics[width=3.5in]{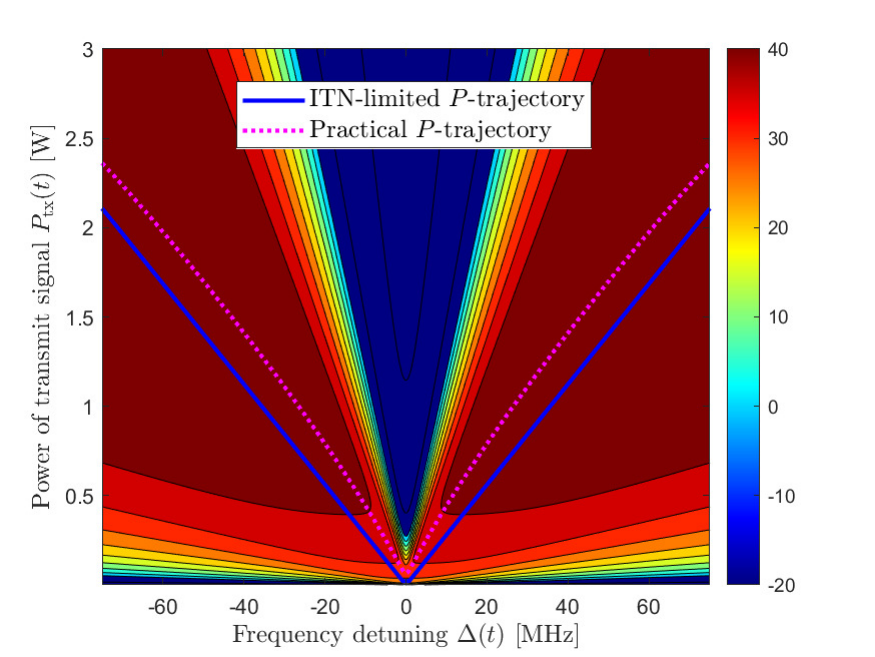}} 
	\caption{(a) The designed ITN-limited $P$-trajectory for RARE with ITN only and (b) the optimized practical $P$-trajectory for RARE with both ITN and ETN. 
	} %
	\vspace*{-1em}
	\label{img:PT}
\end{figure}
We begin by designing a $P$-trajectory for the ITN-limited region, as the RARE's gain in this regime is primarily responsible for the complicated SNR pattern. Since ${\rm SNR}_{\rm itn}$ exhibits an intricate dependence on  $P_{\rm tx}(t)$ and $\Delta(t)$, finding the optimal power $P_{\rm tx}^\star(t)$ that maximizes the ITN-limited SNR for 
arbitrary frequency detuning $\Delta(t)$  is challenging. 
Fortunately, we can determine optimal power values in certain special cases.

As depicted in Fig.~\ref{img:PT}(a), the highest ${\rm SNR}_{\rm itn}$ appears in regions (the dark red area) whereas both the 
frequency detuning and transmission power are large. 
We aim to characterize this high-SNR region at first. 
Specifically, the optimization of transmission power is equivalent to the optimization of Rabi frequency due to the one-to-one mapping 
$\Omega_{\rm r}(t) =  \frac{\mu_{34}}{\hbar}\sqrt{2Z_0P_{\rm tx}(t)G_{\rm tx}'h'^2}$. It becomes convenient to rewrite ${\rm SNR}_{\rm itn}$ as a function of $\Omega_{\rm r}(t)$ and $\Delta_{\rm r}(t)$:
    \begin{align}\small{\rm SNR}_{\rm itn}(t)\propto \left.\kappa^2(\Omega, \Delta)\right|_{\Omega = \Omega_{\rm r}(t),\Delta = \Delta_r(t)},
    \end{align}
    where
    \begin{align}
        \kappa(\Omega, \Delta) 
    \overset{\Delta}{=} \frac{|\Upsilon(\Omega, \Delta)\Omega|}{\sqrt{\Pi(\Omega, \Delta)}}.
    \end{align}
    Here, terms irrelevant to $\Omega_{\rm r}(t)$ and $\Delta_r(t)$ are omitted. Thus, maximizing ${\rm SNR}_{\rm itn}$ reduces to maximizing $\kappa(\Omega, \Delta)$. The following theorem captures the highest $\kappa(\Omega, \Delta)$ when both the 
frequency detuning and transmission power are large.
\begin{theorem}
\label{theorem1}
    When $\Omega\gg1$ and $|\Delta| \gg \Omega$, the optimal  $\Omega^\star$ that maximizes $\kappa(\Omega, \Delta)$ is given by 
    \begin{align}
        {\Omega^{\star}} = \sqrt{k_1|\Delta|},
    \end{align}
where $k_1\overset{\Delta}{=} \sqrt{\frac{ C_3 
	}{4C_1^2}\left(\sqrt{C_0^2B_1^2 + 16C_1^2} - C_0B_1\right)}$.
\end{theorem}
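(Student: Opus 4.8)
The plan is to reduce the two–variable maximization of $\kappa(\Omega,\Delta)$ to an elementary one–dimensional calculus problem, by first exploiting the explicit exponential structure of $\Pi$ and then applying a scaling substitution that removes $\Delta$ entirely from the leading–order objective.

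First I would rewrite $\kappa$ in a convenient form. Setting $g(\Omega,\Delta):=\frac{B_1C_0\Omega^4}{C_1\Omega^4+C_2\Omega^2+C_3\Delta^2}$, so that $\Pi=V_{\rm in}e^{-g}$ and $\Upsilon=\partial_\Omega\Pi=-V_{\rm in}e^{-g}\,\partial_\Omega g$ with $\partial_\Omega g\ge 0$ for $\Omega\ge 0$ (all the $C_i,B_1,C_0$ being positive), one gets $\kappa=|\Upsilon\Omega|/\sqrt{\Pi}=\sqrt{V_{\rm in}}\,\Omega\,(\partial_\Omega g)\,e^{-g/2}$, hence $\kappa^2=V_{\rm in}\,\Omega^2(\partial_\Omega g)^2 e^{-g}$, where $\partial_\Omega g=\frac{2B_1C_0\Omega^3(C_2\Omega^2+2C_3\Delta^2)}{(C_1\Omega^4+C_2\Omega^2+C_3\Delta^2)^2}$ (already contained in the displayed formula for $\Upsilon$). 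Next I would invoke the asymptotic regime: when $\Omega\gg1$ and $|\Delta|\gg\Omega$, the term $C_2\Omega^2$ is negligible against $C_1\Omega^4$ (since $\Omega\gg1$) and against $C_3\Delta^2$ (since $\Delta^2\gg\Omega^2$), so $g\approx\frac{B_1C_0\Omega^4}{C_1\Omega^4+C_3\Delta^2}$, $\partial_\Omega g\approx\frac{4B_1C_0C_3\Omega^3\Delta^2}{(C_1\Omega^4+C_3\Delta^2)^2}$, and therefore $\kappa^2\approx16V_{\rm in}B_1^2C_0^2C_3^2\,\frac{\Omega^8\Delta^4}{(C_1\Omega^4+C_3\Delta^2)^4}\,e^{-g}$.

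The key step is the substitution $w:=\frac{C_1\Omega^4}{C_3\Delta^2}$, i.e.\ the ratio of the two surviving denominator terms. A short computation gives $\frac{\Omega^8\Delta^4}{(C_1\Omega^4+C_3\Delta^2)^4}=\frac{w^2}{C_1^2C_3^2(1+w)^4}$ and $g=\frac{\beta w}{1+w}$ with $\beta:=B_1C_0/C_1$, so every $\Delta$ cancels and $\kappa^2=\frac{16V_{\rm in}B_1^2C_0^2}{C_1^2}\,\psi(w)$ with $\psi(w):=\frac{w^2}{(1+w)^4}e^{-\beta w/(1+w)}$ and a prefactor independent of both $\Omega$ and $\Delta$. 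Maximizing $\ln\psi=2\ln w-4\ln(1+w)-\frac{\beta w}{1+w}$ over $w>0$, the stationarity condition clears to the quadratic $2w^2+\beta w-2=0$, whose unique positive root is $w^\star=\frac14(\sqrt{\beta^2+16}-\beta)$; since $\psi\to0$ as $w\to0^+$ and as $w\to\infty$ while $\psi>0$ with a single interior critical point, $w^\star$ is the global maximizer. Finally I would unwind the substitution: $(\Omega^\star)^4=\frac{C_3w^\star}{C_1}\Delta^2$, i.e.\ $\Omega^\star=\sqrt{\sqrt{C_3w^\star/C_1}\,|\Delta|}=\sqrt{k_1|\Delta|}$; plugging $\beta=B_1C_0/C_1$ into $\sqrt{C_3w^\star/C_1}$ identifies it with $k_1=\sqrt{\frac{C_3}{4C_1^2}(\sqrt{C_0^2B_1^2+16C_1^2}-C_0B_1)}$, which is exactly the asserted value.

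I expect the main obstacle to be the self-consistency of the truncation: dropping $C_2\Omega^2$ is legitimate only in a regime that must itself contain the maximizer, so the argument is circular unless one verifies a posteriori that the $\Omega^\star$ produced above really satisfies the hypotheses. That check is immediate — $\Omega^\star=\sqrt{k_1|\Delta|}\to\infty$ and $|\Delta|/\Omega^\star=\sqrt{|\Delta|/k_1}\to\infty$ as $|\Delta|\to\infty$, whence $C_2(\Omega^\star)^2$ is of strictly smaller order than both $C_1(\Omega^\star)^4$ and $C_3\Delta^2$ — so the cleanest write-up phrases the conclusion as an asymptotic identity ($\Omega^\star/\sqrt{k_1|\Delta|}\to1$ in the stated regime) and closes the loop with this verification. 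Everything else is routine single-variable calculus and algebraic simplification.
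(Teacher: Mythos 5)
Your proof is correct and follows essentially the same route as the paper's: drop the $C_2\Omega^2$ term in the stated asymptotic regime, observe that the surviving objective depends only on the ratio $\Omega^4/\Delta^2$ (your $w$ is just a dimensionless rescaling of the paper's $x=\Omega^4/\Delta^2$), and solve the resulting quadratic stationarity condition, which yields the same $x^\star$ and hence $k_1$. Your two additional checks --- that the unique interior critical point is the global maximizer, and that $\Omega^\star=\sqrt{k_1|\Delta|}$ self-consistently lies in the regime where dropping $C_2\Omega^2$ is legitimate --- are welcome refinements that the paper's proof omits.
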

\begin{proof}
    (See Appendix~\ref{appendix_theorem1}). 
\end{proof}
Using Theorem~\ref{theorem1} and the 
relationship $P_{\rm tx}(t) = \frac{\hbar^2}{2\mu_{34}^2Z_0G_{\rm tx}'{h'}^2}\Omega_{\rm r}^2(t)$,  we are able to directly determine the optimal power that maximizes SNR as:
\begin{align}\label{eq:P_opt1}
	P_{\rm tx}^{\star}(t) = \frac{\hbar^2k_1}{2\mu_{34}^2Z_0G_{\rm 
	tx}'{h'}^2}|\Delta(t)|.
\end{align}
Equation \eqref{eq:P_opt1} indicates that in the off-resonant case where $|\Delta(t)|$ is large, the transmission power $P_{\rm tx}(t)$ should increase linearly with the absolute frequency detuning $|\Delta(t)|$.

Nevertheless, the optimality of \eqref{eq:P_opt1} only holds for off-resonant fields. When the reference signal is nearly resonant ($\Delta(t) \approx 0$), the transmission power computed by \eqref{eq:P_opt1} approaches zero, which would terminate the sensing task and force ${\rm SNR}_{\rm itn} \approx 0$. We provide the following theorem to 
circumvent this issue.
\begin{theorem}\label{theorem2}
    When $\Delta \rightarrow 0$, the optimal  $\Omega^\star$ that maximizes
    $\kappa(\Omega, \Delta)$ is 
\begin{align}\small 
    \Omega^\star = k_2 \overset{\Delta}{=} \sqrt{\frac{ C_2 
    }{4C_1^2}\left(\sqrt{C_0^2B_1^2 
+ 16C_1^2} - 
	C_0B_1\right)}
.\end{align} 
\end{theorem}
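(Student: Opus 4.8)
\noindent\textbf{Proof plan for Theorem~\ref{theorem2}.}
The plan is to specialize the objective $\kappa(\Omega,\Delta)$ to the resonant limit $\Delta=0$ --- the maximizer for small $\Delta$ then converges to this one by continuity of $\kappa$ in $\Delta$ together with the uniqueness established below --- reduce the problem to a one-dimensional unconstrained maximization, and solve its stationarity condition in closed form. Recall that all of $B_1,C_0,C_1,C_2,C_3$ are strictly positive by their definitions. First I would substitute $\Delta=0$ into \eqref{eq:Pi} and \eqref{eq:Upsilon}. The key simplification is that $\Omega^{4}/(C_1\Omega^{4}+C_2\Omega^{2})$ collapses to $\Omega^{2}/(C_1\Omega^{2}+C_2)$ and the factor $C_2\Omega^{2}+2C_3\Delta^{2}$ appearing in $\Upsilon$ reduces to $C_2\Omega^{2}$; after cancelling the common exponential factors between $|\Upsilon|$ and $\sqrt{\Pi}$, one is left --- up to a positive constant independent of $\Omega$ --- with
\begin{align}
\kappa(\Omega,0)\;\propto\;\frac{\Omega^{2}}{(C_1\Omega^{2}+C_2)^{2}}\exp\!\left\{-\frac{B_1C_0\,\Omega^{2}}{2\,(C_1\Omega^{2}+C_2)}\right\}.
\end{align}

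Next I would set $u=\Omega^{2}\ge 0$ and maximize $f(u)=\log\kappa(\Omega,0)$, which turns the product into the sum $f(u)=\log u-2\log(C_1u+C_2)-\tfrac12\,B_1C_0\,u/(C_1u+C_2)+\mathrm{const}$. Differentiating and using $\tfrac{d}{du}\!\left[u/(C_1u+C_2)\right]=C_2/(C_1u+C_2)^{2}$ gives
\begin{align}
f'(u)=\frac{1}{u}-\frac{2C_1}{C_1u+C_2}-\frac{B_1C_0C_2}{2\,(C_1u+C_2)^{2}}.
\end{align}
Multiplying through by $u(C_1u+C_2)^{2}>0$, the condition $f'(u)=0$ collapses to the quadratic $C_1^{2}u^{2}+\tfrac12 B_1C_0C_2\,u-C_2^{2}=0$. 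Its product of roots is $-C_2^{2}/C_1^{2}<0$, so it has exactly one positive root, namely $u^{\star}=\tfrac{C_2}{4C_1^{2}}\!\left(\sqrt{C_0^{2}B_1^{2}+16C_1^{2}}-C_0B_1\right)$, whence $\Omega^{\star}=\sqrt{u^{\star}}=k_2$. To upgrade this stationary point to the global maximizer I would note that $\kappa(\Omega,0)$ is continuous and strictly positive on $(0,\infty)$ while $\kappa(\Omega,0)\to 0$ both as $\Omega\to 0^{+}$ (numerator $\sim\Omega^{2}$) and as $\Omega\to\infty$ (it decays like $\Omega^{-2}$ times a bounded exponential), so the unique interior critical point must be the global maximum.

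I do not anticipate a real obstacle: once the $\Delta=0$ substitution is in place the argument is a single-variable calculus exercise that parallels the proof of Theorem~\ref{theorem1}, with $C_2$ here taking the role that $C_3$ played there. The only steps requiring care are the partial cancellation in the cleared stationarity equation that leaves a plain quadratic in $u$, and the boundary-behaviour argument that turns ``stationary point'' into ``global maximizer''; both are routine. It is also worth stating explicitly in the final write-up that $\Omega^{\star}>0$ is bounded away from zero, so the corresponding resonant-regime transmission power $P_{\rm tx}^{\star}\propto\Omega_{\rm r}^{2}$ does not vanish --- which is precisely the degeneracy of \eqref{eq:P_opt1} that Theorem~\ref{theorem2} is introduced to remove.
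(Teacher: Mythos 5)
Your proposal is correct and follows essentially the same route as the paper: set $\Delta=0$ (dropping the $C_3\Delta^2$ terms), reduce $\kappa$ to a one-variable function of $\Omega^2$, and solve the resulting quadratic stationarity condition, whose unique positive root is $k_2^2$. The only cosmetic differences are that the paper obtains the answer by observing that the reduced function is identical to the $\xi(x)$ of Theorem~\ref{theorem1} with $C_3$ replaced by $C_2$ and reusing that result, whereas you re-derive the quadratic via logarithmic differentiation and additionally supply the boundary-behaviour argument for global optimality that the paper leaves implicit.
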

\begin{proof}
    (See Appendix~\ref{appendix_theorem2}). 
\end{proof}
Similar to \eqref{eq:P_opt1}, the optimal transmission power for small $|\Delta(t)|$ 
can be directly derived from Theorem~\ref{theorem2} as 
\begin{align}\label{eq:P_opt2}
	P_{\rm tx}^\star(t) =\frac{\hbar^2k_2^2}{2\mu_{34}^2Z_0G_{\rm 
	tx}'{h'}^2}.
\end{align}

To summarize, while Theorem~\ref{theorem1} captures the near-optimal transmission power for large  
detunings, Theorem~\ref{theorem2} characterizes it for 
small detunings. When $\Delta(t) = \frac{k_2^2}{k_1}$, the transmission powers 
calculated by \eqref{eq:P_opt1} and \eqref{eq:P_opt2} become equal. 
Leveraging both theorems, we design the ITN-limited $P$-trajectory as:  
    \begin{align}\label{eq:heuristic}
        P_{\rm tx}(t) = \left\{
        \begin{array}{ll}
        \frac{\hbar^2k_2^2}{2\mu_{34}^2Z_0G_{\rm 
	tx}'{h'}^2}& {\rm if} \: |\Delta(t)|\le 
                \frac{k_2^2}{k_1} \\
                            \frac{\hbar^2k_1|\Delta(t)|}{2\mu_{34}^2Z_0G_{\rm 
	tx}'{h'}^2}& {\rm 
                                    if} \: 
                \frac{k_2^2}{k_1}<|\Delta(t)|\le \frac{B}{2}
        \end{array}
        \right..
    \end{align}
    As the frequency detuning $\Delta(t)$ of LFM waveform sweeps from $-B/2$ to $B/2$, the transmission power $P_{\rm tx}(t) $ follows \eqref{eq:P_opt1} when ${k_2^2}/{k_1}<|\Delta(t)|\le{B}/{2}$, and remains fixed at \eqref{eq:P_opt2} when $|\Delta(t)|\le {k_2^2}/{k_1}$. 
The designed ITN-limited $P$-trajectory is illustrated by the blue line in 
Fig.~\ref{img:PT}(a). This trajectory can consistently achieve a high ${\rm SNR}_{\rm itn}$ across all frequency detunings.

\subsection{Practical $P$-trajectory Design}


We now consider the design of a practical $P$-trajectory  that accounts for both ETN and ITN. 
As illustrated in Fig.~\ref{img:PT}, the presence of ETN attenuates the overall SNR curve, meaning the ITN-limited $P$-trajectory might not be accurate enough to achieve good sensitivity for self-heterodyne sensing. To address this issue, we treat the ITN-limited $P$-trajectory as an initial trajectory and then refine it to maximize the total SNR using numerical optimization methods. This yields a practical $P$-trajectory. 

For computational tractability, we discretize the continuous-time functions, such as $P_{\rm tx}(t)$ and $\varrho(t)$. 
Let the discrete 
time samples be $t_s = \frac{s - 1}{S}T$, $s \in 
\{1,2,\cdots, S\}$, where $S$ is the number of samples. Our design incorporates two key constraints:
\begin{itemize}
    \item \textbf{Total power constraint}: The average transmission power cannot exceed $P_{\rm avg}$: 
    \begin{align}
{\rm (C1)}\quad	\frac{1}{S}\sum_{s = 1}^{S} P_{\rm tx}(t_s) \le P_{\rm avg},
\end{align}
\item \textbf{Non-negativity power constraint}: The transmission power must be non-negative:
\begin{align}
{\rm (C2)}\quad\quad 	P_{\rm tx}(t_s) \ge 0, \:\forall s. 
\end{align}
\end{itemize}

Our objective is to maximize the average received 
SNR across all time samples, leading to the optimization problem:
\begin{align}\label{eq:apc}
	\max_{\{P_{\rm tx}(t_s)\}_{s=1}^S}\frac{1}{S}\sum_{s = 
	1}^{S}\varrho^2(t_s)  \quad{\rm 
	s.t.}\: {\rm (C1)}\& {\rm (C2)}.
\end{align}
Solving \eqref{eq:apc} yields a practical $P$-trajectory that maintains high sensitivity in the presence of both ETN and ITN. Although the objective function in \eqref{eq:apc} is nonconvex, constraints (C1) and (C2) are convex. We can therefore apply the PDS method to solve \eqref{eq:apc} numerically~\cite{nesterov_primal-dual_2009}. The detailed PDS algorithm is presented in {\color{black}Appendix D}. The ITN-limited $P$-trajectory serves as the initial point to avoid convergence to shallow local optima.

The optimized practical $P$-trajectory is plotted as the pink dotted line in Fig.~\ref{img:PT}(b). It slightly raises the power curve compared to the ITN-limited trajectory while remaining within the high-SNR (dark red) region. This ensures near-optimal sensitivity for the RARE across the entire frequency sweep.
\section{Experimental Results}\label{sec:6}
\begin{table}[t]
	\centering
	\caption{Experimental Configuration}
	\label{table:1}
	\begin{tabular}{|c|c|c|c|c|c|c|c|}
		\hline  
		\bf  Parameters&\bf Values\\
		\hline  
		 Energy levels, $\{\ket{1},\ket{2}\}$& $\{6S_{1/2},6P_{3/2}\}$ \\ \hline
         Energy levels, $\{\ket{3},\ket{4}\}$& $\{60D_{5/2},61P_{3/2}\}$ \\ \hline
            Transition dipole moment, $\mu_{34}$ & $2409\:qa_0$ \\ \hline
            Transition dipole moment, $\mu_{12}$ & $2.586\:qa_0$ \\ \hline
            {\color{black} Decay rate, $\gamma_2$} & 
            {\color{black} $2\pi \times 5.2\:{\rm MHz}$} \\ \hline
            Rabi frequency, $\Omega_{\rm p}$ & $2\pi \times 5.8\:{\rm MHz}$ \\ \hline
            Rabi frequencies, $\Omega_{\rm c}$ & $2\pi \times 1\:{\rm MHz}$ \\ \hline
            Wavelength of probe laser, $\lambda_p$ & $852\:{\rm nm}$ \\ \hline
            Wavelength of coupling laser, $\lambda_c$ & $509\:{\rm nm}$ \\ \hline
            Input power of probe laser, $P_{\rm in}$ & $120\:\mu {\rm W}$ \\ \hline
            Length of vapor cell, $d$ & $0.02\:{\rm m}$ \\ \hline
            Density of atoms, $N_0$ & $4.89\times10^{16}\:{\rm m^{-3}}$ \\ \hline
            Antenna gains, $\{ G_{\rm tx}, G_{\rm tx}',G_{\rm rx}\}$ & $\{10, -30, 10\}\:{\rm dBi}$  \\ \hline
            Target range, $L$ & $\mathcal{U}(10^2\:{\rm m}, 10^4\:{\rm m})$  \\ \hline
            Cross-section, $\sigma_S$ & $10\:{\rm dBsm}$  \\ \hline
                    Frequencies,  $\omega_{34}$ and $\omega_0$& $2\pi\times 3.212\:{\rm GHz}$ \\ \hline
		Bandwidth, $B$  & $150\:{\rm MHz}$ \\ \hline
		Symbol duration, $T$ & $1\:{\rm ms}$  \\ \hline
            Transmitter-to-RARE distance & $1\:{\rm m}$  \\ \hline
            Average transmission power, $P_{\rm avg}$ & $1.5\:{\rm W}$ \\\hline
            Load impedance, $R_{\rm T}$ & $2\:{\rm k\Omega}$ \\\hline
            quantum efficiency, $\eta$ & $0.8$ \\ \hline
	\end{tabular}

    \vspace*{0em}
    \begin{tablenotes}
\footnotesize
\item $^*$$a_0 = 52.9{\rm pm}$ denotes the Bohr radius~\cite{SIBALIC2017319}.
\end{tablenotes}
	\vspace*{-1em}
\end{table}

In this section, we present numerical experiments to verify the effectiveness of the proposed self-heterodyne quantum wireless sensing. 

\subsection{Experimental Configurations}
To simulate a practical scenario, we adopt the experimental configurations for atoms and lasers from~\cite{RydNP_Jing2020}. Unless otherwise specified, the key parameters and their default values are listed in Table~\ref{table:1}.
The sensing performance is evaluated using the root mean square error (RMSE) of the estimated delay, defined as ${\rm RMSE} = \sqrt{\mathrm{E}(\tau - \hat{\tau})^2}$. The target range $L$ is randomly sampled from the uniform distribution $\mathcal{U}(10^2\:{\rm m}, 10^4\:{\rm m})$. Each data point in the figures is obtained from 3000 Monte Carlo trials.
We compare three approaches: 1) classical wireless sensing; 2) the proposed self-heterodyne sensing with a fixed transmission power $P_{\rm tx}(t)\equiv P_{\rm avg}$; 3) and the proposed self-heterodyne sensing with the transmission power optimized by solving problem~\eqref{eq:apc}. For each benchmark, we also provide the CRLB to indicate the performance limit. The RMSE curves are shown as solid lines, while their CRLB counterparts are displayed as dashed lines with identical markers.
Note that the heterodyne sensing scheme described in \emph{Example 2} is excluded from our experiments. This is because the considered bandwidth ($B = 150$ MHz) significantly exceeds the typical instantaneous bandwidth of a RARE (less than 10 MHz). Currently, there exists no established model to accurately characterize the received signal or any viable algorithm for range estimation under such heterodyne sensing configurations with such high bandwidth requirements.

\subsection{Experimental Results}

\begin{figure}
	\centering
	\includegraphics[width=3.5in]{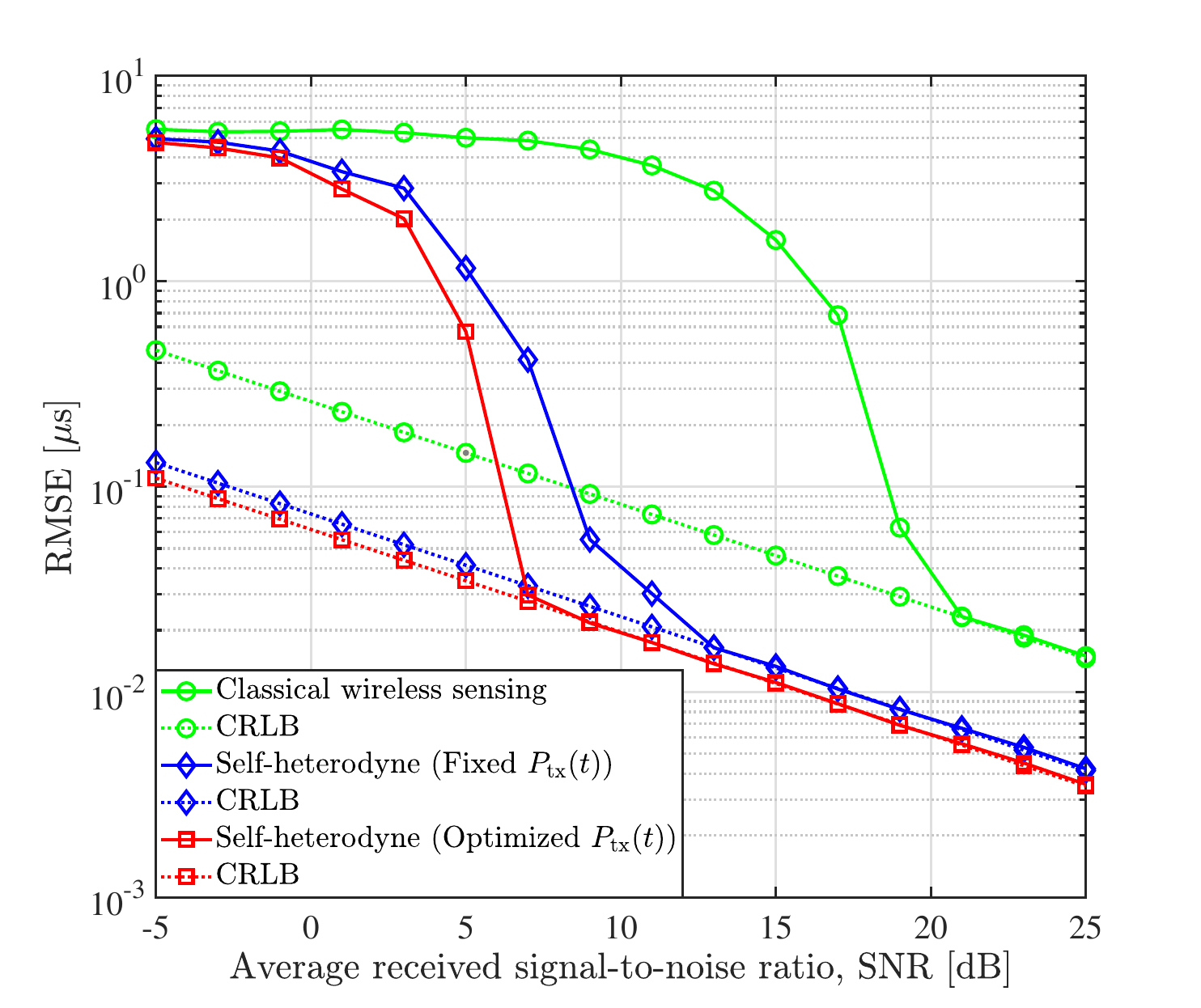}
	\vspace*{-0.8em}
	\caption{RMSE performance versus the average received SNR. 
    } 
	\vspace*{-1em}
	\label{img:receiver_power}
\end{figure}

To begin with, Fig.~\ref{img:receiver_power} shows the RMSE versus the average received SNR, as defined in \eqref{eq:SNR0} for the classical receiver. The SNR variation results from the varying strength of the incident sensing field, ranging from $38\:{\rm nV/m}$ to $1.2\:{\rm \mu V/m}$. 
As expected, the RMSE decreases with increasing SNR for all methods. In particular, the proposed self-heterodyne sensing schemes can reduce the RMSE of classical wireless sensing by 1$\sim$2 orders of magnitude. This substantial improvement stems from the exceptional sensitivity of self-heterodyne RAREs. Furthermore, the optimized 
$P$-trajectory provides an additional 1$\sim$10~dB gain over fixed-power self-heterodyne sensing, validating the effectiveness of power trajectory optimization. Moreover, the derived CRLBs accurately characterize the performance limits of all benchmarks in high-SNR regimes (above 10~dB for self-heterodyne sensing schemes and 20~dB for classical wireless sensing).

\begin{figure}
	\centering
	\includegraphics[width=3.5in]{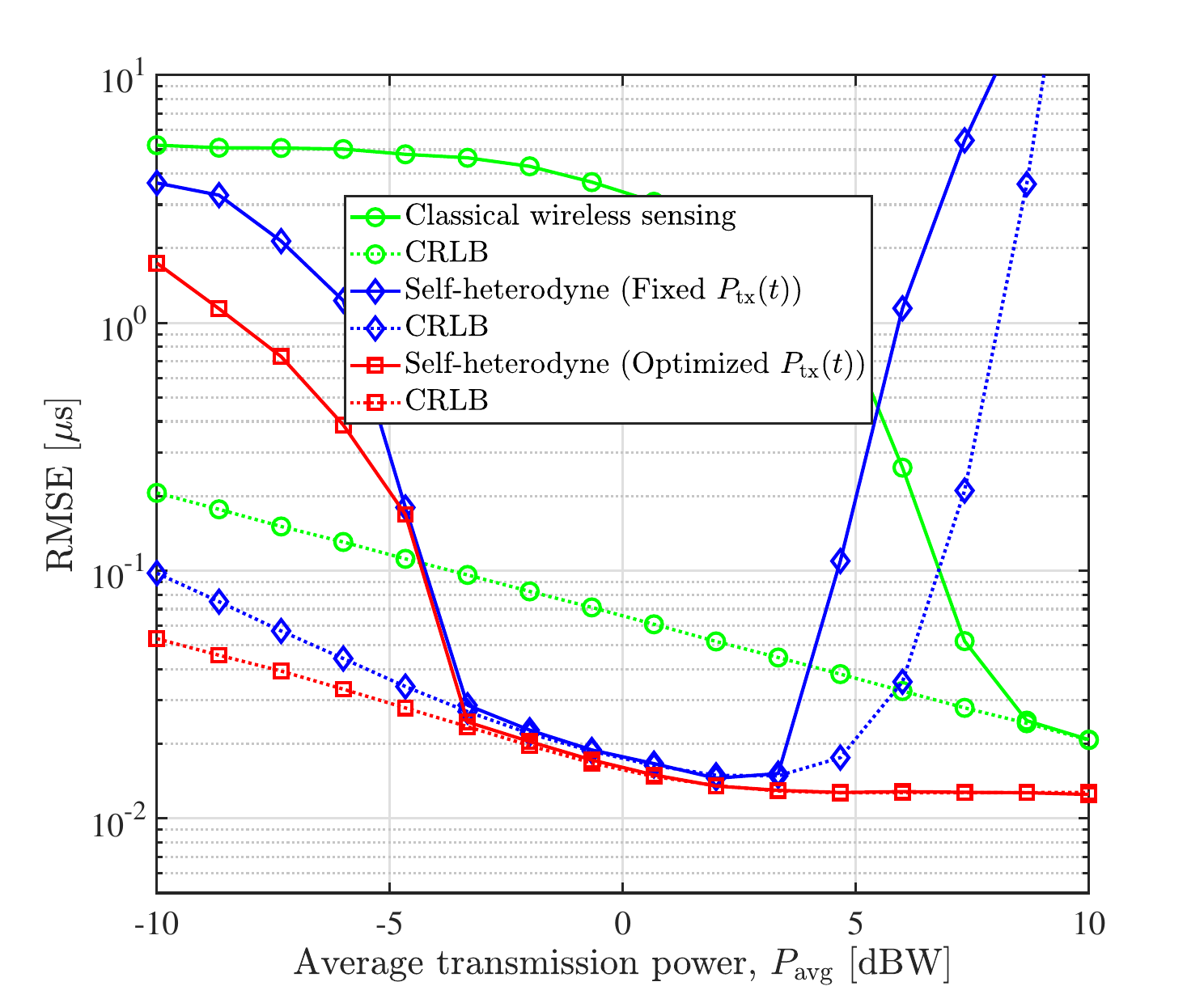}
	\vspace*{-0.8em}
	\caption{RMSE performance versus the average transmission power, $P_{\rm avg}$. } 
	\vspace*{-1em}
	\label{img:MSE_ptx}
\end{figure}

Figure~\ref{img:MSE_ptx} shows the RMSE performance as a function of the average transmission power $P_{\rm avg}$, which ranges from -10 dBW to 10 dBW. 
It is evident from this figure that the proposed self-heterodyne sensing scheme with an optimized power trajectory consumes significantly less transmission power than the classical sensing 
method for achieving the same RMSE level. For instance, to achieve an RMSE of $5\times 10^{-2}\:{\rm \mu s}$, 
the required transmission power is reduced from 2~dBW (classical 
receiver) to -10~dBW (self-heterodyne RARE). In addition, the self-heterodyne sensing with fixed transmission power exhibits a sharp performance degradation when $P_{\rm avg} > 4\:{\rm dBW}$. This deterioration occurs because the fixed power trajectory creates low-SNR regions that become dominant as $P_{\rm avg}$ 
increases, as visible in Fig.~\ref{img:SNR}. In contrast, the optimized power trajectory effectively avoids these low-SNR regions, maintaining accurate range estimation across the entire transmission power range.

\begin{figure}
	\centering
	\includegraphics[width=3.5in]{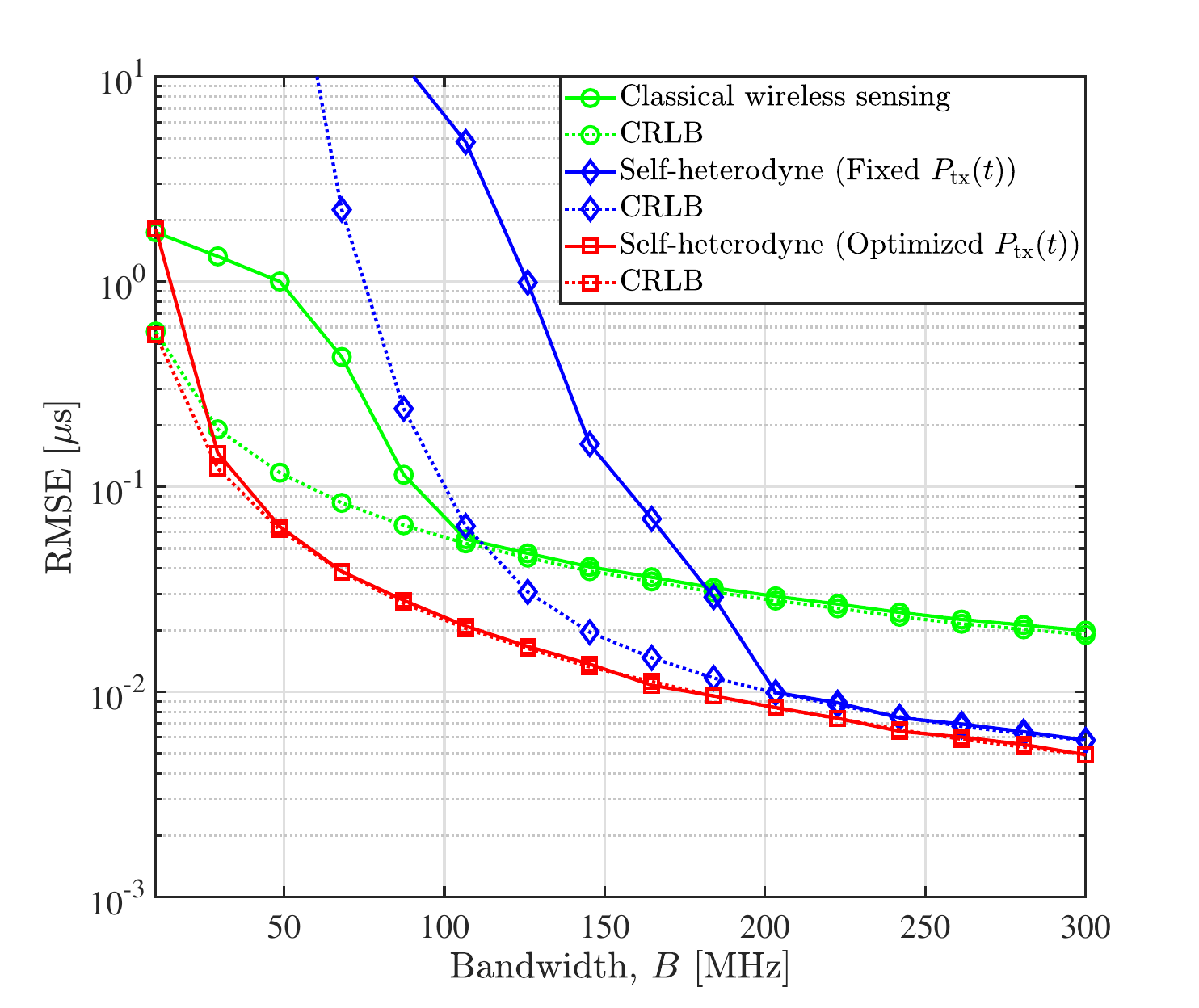}
	\vspace*{-0.8em}
	\caption{RMSE performance versus the system bandwidth, $B$. } 
	\vspace*{-1em}
	\label{img:MSE_band}
\end{figure}

Figure~\ref{img:MSE_band} evaluates the impact of system bandwidth 
$B$ on wireless sensing performance, with $B$ ranging from 10~MHz to 300~MHz.
For large bandwidths (e.g., $B > 200$~MHz), the self-heterodyne sensing schemes can consistently reduce the RMSE of delay estimation from above $2\times 10^{-2}\:{\rm \mu s}$ to below $1\times 10^{-2}\:{\rm \mu s}$. This demonstrates that our self-heterodyne sensing approach can achieve $\sim 100\:{\rm MHz}$-level sensing bandwidth with high sensitivity, substantially surpassing existing heterodyne methods. 
For small bandwidths (e.g., $B < 100$~MHz), self-heterodyne sensing without power trajectory optimization becomes inferior to classical wireless sensing, while the $P$-trajectory optimized version maintains superior performance. This occurs because Rydberg atoms exhibit reduced sensitivity to external fields when the reference field is very strong with small detuning, as evidenced in Fig.~\ref{img:SNR}, highlighting the importance of power-trajectory optimization. Beyond the achieved $\sim 100\:{\rm MHz}$-level sensing bandwidth, it is promising to incorporate lindwidth broadening techniques---such as 
the Zeeman effect and AC Starf shift~\cite{hu_continuously_2022,RydAM_Zhen2019, shi_tunable_2023}---to further increase self-heterodyne sensing bandwidth to the $\sim 1\:{\rm GHz}$ level, paving the way for ultra-broadband quantum wireless sensing. 


\begin{figure}\color{black}
	\centering
	\includegraphics[width=3.5in]{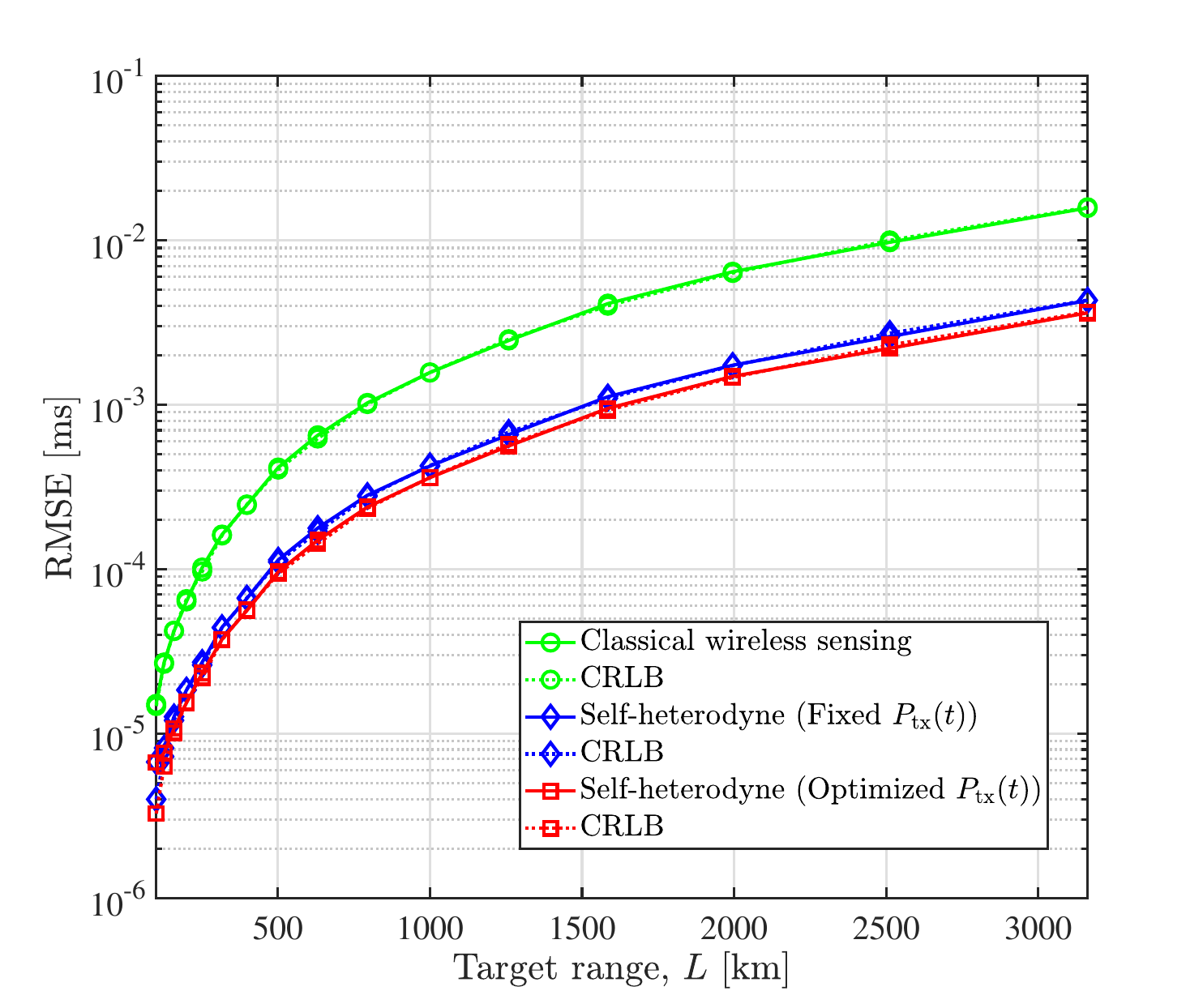}
	\vspace*{-0.8em}
	\caption{RMSE performance versus the target range, $L$. 
    } 
	\vspace*{-1em}
	\label{img:MSE_range}
\end{figure}

{\color{black} Finally, Fig.~\ref{img:MSE_range} demonstrates the capability of the self-heterodyne approaches in long-range target sensing ($L \in [100\:{\rm km}, 3000\:{\rm km}]$). This is a typical scenario in modern radar systems for monitoring distant targets, like aircraft or missiles. 
To address the significant propagation loss, all benchmarks employ enhanced transmitter parameters: an average transmission power of 
$P_{\rm avg} = 30\:{\rm dBW}$, a transmit antenna gain of $G_{\rm tx} = 40~{\rm dBi}$, and a bandwidth of $B = 40\:{\rm MHz}$.  As expected, the RMSE increases with target range for all methods due to growing propagation loss. The proposed self-heterodyne schemes consistently outperform conventional sensing across all ranges, demonstrating the advantage of Rydberg atomic receivers' high sensitivity to weak electromagnetic fields. Notably, at 1,000~km range, the RMSE is reduced from 1.7~$\rm \mu s$ (classical wireless sensing) to approximately 0.4~$\rm \mu s$ (self-heterodyne sensing), confirming the significant advantage of RARE technology for long-range sensing applications.}

\section{Conclusions}\label{sec:7}
We have proposed the self-heterodyne sensing approach to realize quantum 
wireless sensing without extra reference sources. It was discovered that a 
self-heterodyne RARE fundamentally operates as an atomic autocorrelator, 
enabling automatic conversion of the target range into the fluctuation 
frequency of the probe-laser power. Particularly, this autocorrelation 
mechanism allows a RARE to detect sensing signals with a wide bandwidth, 
paving the way for high-resolution quantum wireless sensing. Moreover, we have 
identified the critical role of $P$-trajectory in maximizing RARE 
sensitivity. Numerical results demonstrate that self-heterodyne sensing can 
reduce the MSE of range sensing by orders of magnitude.

This work represents a significant stride towards the development of high-resolution quantum wireless sensing systems. We conclude by outlining potential avenues for future research. First, using self-heterodyne sensing to do multi-target detection remains an open problem.  
Furthermore, it is promising to extend self-heterodyne sensing to RARE-based Integrated Sensing and Communications (RARE-ISAC) systems. In addition, while this work assumes fixed Rabi frequencies and zero frequency detunings for the probe and coupling lasers, dynamically optimizing these time-varying parameters could further enhance RARE sensitivity. {\color{black} Last, it is observed from Figs.~\ref{img:receiver_power}$\sim$\ref{img:MSE_range} that the CRLB is only tight in the high SNR regime.  To obtain a global tight performance bound of self-heterodyne sensing, it is essential to analyze the Ziv-Zakai bound (ZZB) instead~\cite{ZZB_Gu2022, ZZB_Xu2010}, which remains a promising but challenging direction due to the time-varying sensitivity of Rydberg atoms.} 


\appendix{
\subsection{Expression of FIM and Proof of Proposition~\ref{proposition1}}
Denote $\Lambda$ as the loglikehood function $\log p(\bar{y}(t)|\boldsymbol{\kappa})$.  
Each element of the FIM is calculated as
	\begin{small}\begin{align}\label{eq:FIM}
		\left\{
		\begin{array}{l}
			-E\left(\frac{\partial^2 \Lambda}{\partial h^2}\right) =  
			\int_0^T\varrho^2(t)\cos^2(\omega t + \varphi) {\rm d}t \\
			-E\left(\frac{\partial^2 \Lambda}{\partial \omega^2}\right) =  
			h^2\int_0^T\varrho^2(t) t^2 \sin^2(\omega t + \varphi) {\rm d}t \\
			-E\left(\frac{\partial^2 \Lambda}{\partial \varphi^2}\right) =  
			h^2\int_0^T\varrho^2(t)  \sin^2(\omega t + \varphi) {\rm d}t \\
			-E\left(\frac{\partial^2 \Lambda}{\partial \omega\partial 
				\varphi}\right) =  
			h^2\int_0^T\varrho^2(t)  t \sin^2(\omega t + \varphi) {\rm d}t \\
			-E\left(\frac{\partial^2 \Lambda}{\partial h \partial 
			\varphi}\right) =  
			-\frac{h}{2}\int_0^T\varrho^2(t) \sin(2\omega t + 
			2\varphi) 
			{\rm d}t\\
			-E\left(\frac{\partial^2 \Lambda}{\partial h \partial \omega}\right) 
			=  
			-\frac{h}{2}\int_0^T\varrho^2(t)  t\sin(2\omega t + 
			2\varphi) 
			{\rm d}t 
		\end{array}
		\right. .
	\end{align}\end{small}
    
For large $\omega$, the functions $\cos(2\omega t + 2\phi)$ and $\sin(2\omega t 
+ 2\phi)$ will fluctuate much faster than $\varrho^2(t)$,  $\varrho^2(t)t$, 
and  $\varrho^2(t)t^2$. In this context, the integration between a former one 
and a latter one will gradually average to zero. Therefore, the elements of 
FIM in \eqref{eq:FIM} are asymptotically
\begin{align}\small
	\left\{
	\begin{array}{l}
		-E\left(\frac{\partial^2 \Lambda}{\partial h^2}\right) \rightarrow  
		\frac{1}{2}	\int_0^T\varrho^2(t) {\rm d}t \\
		-E\left(\frac{\partial^2 \Lambda}{\partial \omega^2}\right) \rightarrow  
		\frac{h^2}{2}\int_0^T\varrho^2(t) t^2 {\rm d}t \\
		-E\left(\frac{\partial^2 \Lambda}{\partial \varphi^2}\right) 
		\rightarrow  
		\frac{h^2}{2}\int_0^T\varrho^2(t)  {\rm d}t \\
		-E\left(\frac{\partial^2 \Lambda}{\partial \omega\partial 
			\varphi}\right) \rightarrow
		\frac{h^2}{2}\int_0^T\varrho^2(t)  t {\rm d}t \\
		-E\left(\frac{\partial^2 \Lambda}{\partial h \partial \varphi}\right) , 
		-E\left(\frac{\partial^2 \Lambda}{\partial h \partial \omega}\right) 
		\rightarrow 0
	\end{array}
	\right..
\end{align}
We define the notations $\bar{\varrho}_0 \overset{\Delta}{=} 
\int_0^T\varrho^2(t) {\rm d}t$, $\bar{\varrho}_1 \overset{\Delta}{=} 
\int_0^T\varrho^2(t) t {\rm d}t$, and $\bar{\varrho}_2 \overset{\Delta}{=} 
\int_0^T\varrho^2(t) t^2 {\rm d}t$.  The inverse of FIM is 
thus
\begin{align}\small
	\mb{I}^{-1}(\boldsymbol{\kappa}) = \left(\begin{array}{ccc}
		\frac{2}{\bar{\varrho}_0}&0&0\\
		0&\frac{2\bar{\varrho}_0}{h^2(\bar{\varrho}_2\bar{\varrho}_0 - 
		\bar{\varrho}_1^2)}&\frac{-2\bar{\varrho}_1}{h^2(\bar{\varrho}_2\bar{\varrho}_0
		 - 
		\bar{\varrho}_1^2)}\\
		0&\frac{-2\bar{\varrho}_1}{h^2(\bar{\varrho}_2\bar{\varrho}_0 - 
			\bar{\varrho}_1^2)}&\frac{2\bar{\varrho}_2}{h^2(\bar{\varrho}_2\bar{\varrho}_0
			 - 
			\bar{\varrho}_1^2)}
	\end{array}\right).
\end{align}
As a result, the CRLB of the frequency parameter $\omega$ is 
\begin{align}
	{\rm CRLB}(\omega) = 
	\frac{2\bar{\varrho}_0}{h^2(\bar{\varrho}_2\bar{\varrho}_0 - 
		\bar{\varrho}_1^2)}.
\end{align}
The associated CRLB of propagation delay $\tau = \frac{\omega}{\alpha} + \tau'$ is 
\begin{align}
	{\rm CRLB}(\tau) = 
	\frac{2\bar{\varrho}_0}{\alpha^2h^2(\bar{\varrho}_2\bar{\varrho}_0 - 
		\bar{\varrho}_1^2)}.
\end{align}\vspace{-2em}

\subsection{Proof of Theorem 1}\label{appendix_theorem1}
The explicit expression of $\kappa(\Omega, \Delta)$ is written as
    \begin{align}\small\label{eq:appendix1}
   \kappa(\Omega, \Delta) &=  2\sqrt{V_{\rm in}}C_0 B_1\frac{\Omega^4(C_2\Omega^2 + 2C_3\Delta^2)}{(C_1\Omega^4 + C_2\Omega^2 + C_3\Delta^2)^2} \notag\\&\times \exp\left\{
    -\frac{\frac{1}{2}C_0B_1\Omega^4}{C_1\Omega^4 + C_2\Omega^2 + C_3\Delta^2}
    \right\}.
\end{align}
When $\Delta \gg \Omega$ and $\Omega \gg 0$, we have $C_3\Delta^2 \gg C_2\Omega^2$ and $C_1\Omega^4 \gg C_2\Omega^2$. In this way, we can safely remove the term $C_2\Omega^2$ in \eqref{eq:appendix1} and rewrite $\kappa(\Omega, \Delta) $ as 
\begin{align}\small\label{eq:xi}
    \kappa(\Omega, \Delta)  &\propto  e^{
    -\frac{\frac{1}{2}C_0B_1\Omega^4}{C_1\Omega^4 + C_3\Delta^2}
    } \frac{2C_3\Omega^4\Delta^2}{(C_1\Omega^4 + C_3\Delta^2)^2}  \\
    &= e^{
    -\frac{\frac{1}{2}C_0B_1\frac{\Omega^4}{\Delta^2}}{C_1\frac{\Omega^4}{\Delta^2} + C_3}
    } \frac{2C_3\frac{\Omega^4}{\Delta^2}}{\left(C_1\frac{\Omega^4}{\Delta^2} + C_3\right)^2} \overset{\Delta}{=} \xi \left(\frac{\Omega^4}{\Delta^2}\right).  \notag
\end{align}
Thereafter, maximizing $\kappa(\Omega, \Delta) $ is equivalent to maximizing 
$\xi \left(\frac{\Omega^4}{\Delta^2}\right)$. Specifically, the derivative of 
$\xi (x)$ is 
\begin{align}
    \frac{{\rm d}\xi(x)}{{\rm d}x}
    = 2C_3e^{
    -\frac{\frac{1}{2}C_0B_1x}{C_1x + C_3}
    }\frac{ C_3^2-\frac{1}{2}C_0C_3B_1x -C_1^2x^2  }{\left(C_1x + C_3\right)^4}.  
\end{align}
After some tedious calculations, the maximum of $\xi (x)$ is achieved when  
\begin{align}
    x^{\star} = \frac{ C_3 }{4C_1^2}\left(\sqrt{C_0^2B_1^2 + 16C_1^2} - C_0B_1\right).
\end{align}
By invoking the relationship $\frac{{\Omega^{\star}}^4}{{\Delta^{\star}}^2} = x^{\star}$, 
we obtain Theorem~\ref{theorem1}.

\subsection{Proof of Theorem~\ref{theorem2}}\label{appendix_theorem2}
When $\Delta \rightarrow 0$, we can safely remove the term $C_3\Delta^2$ in $\kappa(\Omega, \Delta) $ and simplify it as 
\begin{align}\label{eq:zeta}
    \kappa(\Omega, \Delta)  &\propto e^{
    -\frac{\frac{1}{2}C_0B_1\Omega^2}{C_1\Omega^2 + C_2}
    } \frac{C_2\Omega^2}{(C_1\Omega^2 + C_2)^2} \overset{\Delta}{=} \zeta(\Omega^2).
\end{align}
Thereafter, maximizing $\kappa(\Omega, \Delta) $ is equivalent to maximizing 
$\zeta(\Omega^2)$. Moreover, by comparing the equations \eqref{eq:xi} and 
\eqref{eq:zeta}, one can figure out that the only difference between the 
functions 
$\xi(x)$ and $\zeta(x)$ is that the coefficient $C_3$ in $\xi(x)$ is changed as 
$C_2$ in $\zeta(x)$. 
Therefore, the optimal $x^{\star}$ that maximizes $\zeta(x)$ is clearly 
\begin{align}
    x^{\star} = \frac{ C_2 }{4C_1^2}\left(\sqrt{C_0^2B_1^2 + 16C_1^2} - C_0B_1\right).
\end{align}
By invoking the relationship ${\Omega^{\star}}^2 = x^{\star}$, we obtain 
Theorem~\ref{theorem2}.


\subsection{The PDS method for solving \eqref{eq:apc}}
We consider to solve \eqref{eq:apc} using the PDS method~\cite{nesterov_primal-dual_2009}. Firstly, we introduce 
the Lagrange multipliers $\boldsymbol{\nu} = 
[\nu_1,\cdots,\nu_S]^T$ and $\upsilon$ as well as a positive super-parameter 
$\beta>0$. We further define the optimization variable 
as $\mb{p} = [P_{\rm tx}(t_1), \cdots, P_{\rm tx}(t_S)]^T$.
The augmented Lagrangian function of \eqref{eq:apc} can be written as 
\begin{align}
	\mathcal{L} &= g(\mb{p}) + \upsilon f(\mb{p}) + 
		\boldsymbol{\nu}^T\mb{h}(\mb{p}) +{\beta}(f^2(\mb{p}) + 
		\|\mb{h}(\mb{p})\|^2),
\end{align}
where functions $g(\mb{p})$,   $f(\mb{p})$, and $\mb{h}(\mb{p})$ are 
defined 
as $g(\mb{p}) 
\overset{\Delta}{=} -\frac{1}{S}\sum_{s = 
	1}^{S}\varrho^2(t_s)$, $f(\mb{p}) \overset{\Delta}{=} \left(\sum_{s = 
	1}^{S} 
	P_{\rm tx}(t_s) - S 
P_{\rm avg}\right)^+$, and 
$\mb{h}(\mb{p}) \overset{\Delta}{=} [(-P_{\rm tx}(t_1))^+,\cdots, (-P_{\rm 
tx}(t_S))^+]$. Next, we introduce a superscript $i$ to each variable as the 
iteration index, and the iterative formulas of $\mb{p}^{i}$, 
$\boldsymbol{\mu}^{i}$, 
and $\upsilon^{i}$ are written as 
\begin{align}
	\left(\begin{array}{l}
		\mb{p}^{i + 1}\\
		\upsilon^{i+1}\\
		\boldsymbol{\nu}^{i + 1} 
	\end{array}
	\right) = 
	\left(\begin{array}{l}
		\mb{p}^{i}\\
		\upsilon^{i}\\
		\boldsymbol{\nu}^{i}
	\end{array}
	\right) - \varpi^i\left(\begin{array}{l}
		\partial \mathcal{L}/\partial \mb{p}|_{\mb{p} = \mb{p}^i}\\
		-\partial \mathcal{L}/\partial \upsilon|_{\upsilon = \upsilon^i}\\
		-\partial \mathcal{L}/\partial \boldsymbol{\nu}|_{\boldsymbol{\nu} = 
		\boldsymbol{\nu}^i}
	\end{array}
	\right),
\end{align}
where the step length $\varpi^{i} > 0$ is a sufficiently small value. 
By simultaneously optimizing $\mb{p}$, 
$\boldsymbol{\mu}$, 
and $\upsilon$ until convergence, the $P$-trajectory can be finally obtained.
}

\bibliographystyle{IEEEtran}
\bibliography{Reference.bib}
\end{document}